\newtheorem{Theo}{Theorem}
\newtheorem{lemma}{Lemma}
\begin{document}

\title{Spectrum Sharing Towards Delay Deterministic Wireless Network: Delay Performance Analysis}

\author{Zhiqing~Wei,~\IEEEmembership{Member,~IEEE,}
        Ling~Zhang,~\IEEEmembership{Student Member,~IEEE,}\\
        Gaofeng~Nie,~\IEEEmembership{Member,~IEEE,}
        Huici~Wu,~\IEEEmembership{Member,~IEEE,}
        Ning~Zhang,~\IEEEmembership{Senior Member,~IEEE,}\\
        Zeyang~Meng,~\IEEEmembership{Student Member,~IEEE,}
        and~Zhiyong Feng,~\IEEEmembership{Senior Member,~IEEE}
\thanks{
This work was supported in part by the National Natural Science Foundation of China (NSFC) under Grant 62271081, 92267202 and U21B2014, in part by the National Key Research and Development Program of China under Grant 2020YFA0711303, and in part by the Young Elite Scientists Sponsorship Program by CAST under Grant 2020QNRC001.

Zhiqing Wei, Ling Zhang, Gaofeng Nie, Zeyang Meng, and Zhiyong Feng
are with the School of Information and Communication Engineering, Beijing University of Posts and Telecommunications, Beijing 100876, China (email: \{weizhiqing; zhangling\_zl; niegaofeng; mengzeyang; fengzy\}@bupt.edu.cn).

Huici Wu is with the National Engineering Research Center of Mobile Network Technologies, Beijing University of Posts and Telecommunications, Beijing 100876, China (e-mail: dailywu@bupt.edu.cn).

Ning Zhang is with the Department of Electrical and Computer Engineering, University of Windsor, Windsor, ON, N9B 3P4, Canada. (e-mail: ning.zhang@uwindsor.ca).

Correspondence authors: Zhiyong Feng, Huici Wu.
}}

\maketitle

\begin{abstract}
To accommodate Machine-type Communication (MTC) service, the wireless network needs to support low-delay and low-jitter data transmission, realizing delay deterministic wireless network. This paper analyzes the delay and jitter of the wireless network with and without spectrum sharing. When sharing the spectrum of the licensed network, the spectrum band of wireless network can be expanded, such that the delay and jitter of data transmission are reduced. The challenge of this research is to model the relation between the delay/jitter and the parameters such as node distribution, transmit power, and bandwidth, etc. To this end, this paper applies stochastic geometry and queueing theory to analyze the outage probability of the licensed network and the delay performance of the wireless network with and without spectrum sharing. By establishing the M/G/1 queueing model for the queueing of the Base Station (BS) in the wireless network, the downlink delay and jitter are derived. Monte Carlo simulation results show that the spectrum sharing reduces the delay and jitter without causing serious interference to the licensed network, which can lay a foundation for the application of spectrum sharing in delay deterministic wireless network supporting MTC service.
\end{abstract}

\begin{IEEEkeywords}
Machine-type communication, delay, delay jitter, queueing theory, stochastic geometry, spectrum sharing, delay deterministic wireless network, industrial wireless network.
\end{IEEEkeywords}

\section{Introduction}
\IEEEPARstart{W}{ith} the accelerated penetration of wireless technology into the industrial field, the performance of industrial wireless network has been continuously improved, and the intelligent manufacturing industry represented by process industry and automatic control has developed rapidly. For example, in a smart factory, the computing system collects data from sensors and machines to make decisions, and the machines perform corresponding operations according to the received control information \cite{7123559}. In the above scenario, in order to ensure real-time data transmission and control of machines, industrial wireless network needs to support low-delay and low-jitter communication services.

\indent In the recent literature, resource allocation \cite{5764568,9606568,9560065,9725256,10018944} and scheduling methods \cite{8489935,8964351,8882510} are studied to guarantee the low-delay and low-jitter communication performance. \textit{On resource allocation methods}, Maadani \textit{et al.} \cite{5764568} analyzed the priority of IEEE 802.11e from the perspective of packet delay and reliability. Then, they studied the superiority of using spatial multiplexing technology to reduce delay in Multiple-input Multiple-output (MIMO) based industrial wireless network. Jin \textit{et al.} introduced Mobile Edge Computing (MEC) into Industrial Internet of Things (IIoT) applications in \cite{9606568} and proposed an Improved Differential Evolution (IDE) algorithm to optimize the traffic offloading and resource allocation to minimize the weighted sum of delay and energy in the multi-hop MEC-based IIoT nerwork. In \cite{9560065}, combining Non-orthogonal Multiple Access (NOMA) and MEC techniques, a joint optimization problem of sub-channel allocation, offloading decision and computational resource allocation is studied to minimize the average task delay of all users. In \cite{10018944}, Zhang \textit{et al.} proposed a spectrum allocation optimization mechanism based on federated learning, which minimizes the delay of federated learning to improve the transmission efficiency of mobile networks, taking into account the energy consumption of a single participating device. \textit{On scheduling methods}, Chen \textit{et al.} \cite{8489935} represented the scheduling and channel allocation solutions as flow-link-channel-slot tuples and minimized the end-to-end delay of the network through joint transmission scheduling and channel allocation. In order to ensure that the real-time data flows arrive at the destination devices under the deadline constraint and reduce the average transmission delay, Wang \textit{et al.} \cite{8964351} proposed the link conflict classification based branch and bound algorithm that obtains the optimal scheduling solution by constructing a search tree. Besides, the least conflict degree algorithm is proposed that allocates channels by dynamically adjusting the order of data streams in a heuristic manner. Simulation results verify that both algorithms reduce the average transmission delay in Industrial Wireless Sensor Networks (IWSNs). In \cite{8882510}, Li \textit{et al.} proposed a channel-based Optimal Back-off Delay Control (OBDC) scheme, which minimizes the total time of packet transmission in IWSNs. By combining Level Crossing Rate (LCR) of the received signal in the industrial wireless network and the proposed phase-type semi-Markov model, the OBDC scheme can react quickly to time-varying wireless channels and find the optimal back-off delay for each packet to control transmission delay when the number of nodes and the LCR are changing.

\indent Overall, the delay and jitter in communication can be reduced by allocating or scheduling resources.
However, with the expansion of industrial equipment,
only relying on resource allocation and scheduling methods can not meet high communication requirements.
The bandwidth requirements for ensuring real-time transmission of large amount of data are increasing \cite{8825819}.
And the limited available spectrum resources have become the bottleneck limiting the delay performance of the wireless network \cite{6883319}.
In order to solve the contradiction between the promotion of industrial equipment and limited spectrum resources,
communication in industrial wireless network need to expand the spectrum to enhance system capacity \cite{9063521}.
However, it is unlikely to obtain enough licensed frequency bands for industrial wireless network \cite{9063521}.
Due to the scarcity of spectrum resources and the high cost of exploiting new frequency bands, spectrum sharing has become promising to improve spectrum utilization.
As a secondary network,
the industrial wireless network serving machine-type communications accesses the spectrum used by the primary network, so as to expand its spectrum,
improving network performance.

\indent There are a large number of studies on spectrum sharing in industrial wireless network \cite{9520317}.
Rodriguez \textit{et al.} evaluated the network performance based on cognitive radio in different industrial scenarios in \cite{7160576}.
Research shows that cognitive radio has the potential to maintain the performance of network in harsh channels and under interference.
Si \textit{et al.} applied cognitive radio in \cite{8254745} to obtain enough frequency bands for delay tolerant industrial data transmission and applied edge computing to generate real-time responses to delay sensitive industrial data transmission. When sharing spectrum, users will receive multiple different signals. it is necessary to identify the signals to obtain useful signals.
Liu \textit{et al.} proposed a signal classification framework based on deep learning networks in \cite{9063521}. And various signal classification methods have been proposed to adapt to general situations.
Considering the different energy consumption when cognitive radio equipment switches to different frequency bands, Demirci \textit{et al.}
in \cite{8887230} proposed a polynomial time heuristic algorithm to solve the energy consumption problem caused by channel switching.
However, due to spectrum switching in this scenario,
communication may even be interrupted.

\indent Some researchers have analyzed the delay performance of spectrum sharing. In \cite{7707414}, Sibomana \textit{et al.} put forward the transmit power strategy of 
secondary users. And the expressions of the average packet waiting delay in the queue and the total delay in the system for each class of traffic are obtained. In \cite{7589870}, Bayrakdar \textit{et al.} studied the end-to-end delay of cognitive radio networks based on Time Division Multiple Access (TDMA). In \cite{9306105}, Amini \textit{et al.} analyzed the average packet delay of the cognitive IoT network and studied the impact of primary network traffic behaviors and loT network parameters on delay of the IoT network. However, in \cite{7589870} and \cite{9306105}, the secondary user cannot always use the spectrum to ensure the real-time data transmission in industrial wireless network. \cite{7746101} and \cite{8785815} both analyze the delay performance when using competitive access to shared spectrum, but spectrum sharing based on competition will lead to communication interruption, which is not allowed in industrial wireless network. Lin \textit{et al.} \cite{6883319,7297795} designed autonomous switching rules to improve the spectral efficiency of IWSN on the Industrial, Scientific and Medical (ISM) band. However, spectrum switching will cause service interruption, which can affect the real-time data transmission.

\indent In order to solve the problems that spectrum switching and low priority of users in previous studies cannot guarantee real-time communication in industrial wireless network, 
this paper proposes a spectrum sharing model. The licensed network is a Human-type Communication (HTC) network. The wireless network supporting MTC service is a MTC network, which shares the spectrum of HTC network. Through the analysis of delay and jitter of MTC network, the impact of the parameters such as transmit power, density of users, etc., on the delay and jitter is revealed, which provides a guideline for the delay and jitter optimization of MTC network with spectrum sharing.
The contributions of this paper are as follows.

\begin{enumerate}
\item{The spectrum sharing model is established by
using stochastic geometry. The outage probability of HTC network
    with and without spectrum sharing is analyzed,
    providing a theoretical basis for reducing
    the interference of MTC network to HTC network.}
\item{The M/G/1 queueing model is established for the BS in MTC network.
The expressions of delay and delay jitter with and without spectrum sharing are obtained when MTC network has  proprietary spectrum or not.
Simulation analysis shows that spectrum sharing can reduce the delay and delay jitter of MTC network when MTC network has proprietary spectrum, 
which provides a solution for
improving the delay determinism of MTC networks.}
\end{enumerate}

\indent The structure of this paper is as follows.
Section II introduces the system model.
Section III analyzes the performance of HTC networks with and without
spectrum sharing, and expressions for the outage probability are obtained.
Section IV analyzes the performance metrics of MTC network
such as the delay and jitter with or without proprietary spectrum and obtains the delay performance.
The simulation results and analysis are given in Section V,
which verify the accuracy of the theoretical derivation by comparing Monte Carlo simulation results and theoretical results.
Section VI provides the conclusion of this research.
The variables and acronyms in this paper are provided in Table \ref{Table_1}.
\begin{table}[!t]\footnotesize
\label{Table_1}
 \caption{The Key Notations}
 \renewcommand{\arraystretch}{1} 
 \begin{center}
  \begin{tabular}{m{0.07\textwidth}m{0.4\textwidth}}
   \hline
   \hline
   Notation & Description \\
   \hline
   $ P_h $& Transmit power of HBSs \\
   $ P_m $ & Transmit power of MBS when using proprietary spectrum \\
   ${P_{{\rm{m}}}^{{\rm{'}}}}$ & Transmit power of MBS when using shared spectrum\\
   $ x_0 $ & Distance from the typical UE to the closest HBS \\
   $ y_0 $ & Distance from the typical MTC device to MBS \\
   $ B_h $ & Bandwidth of shared spectrum \\
   $ B_m $ & Bandwidth of the proprietary spectrum of MTC network \\
   $ N $ & Power spectral density of noise \\
   $ \alpha $ & Path loss factor \\
   $ U_m $ & Size of a data packet \\
   $ t_{out} $ & Longest service delay \\
   $\lambda_h $ & Distribution density of HBSs \\
   $\lambda_{mu} $ & Distribution density of MTC devices \\
   $ N_h $ & Number of UEs served by each HBS \\
   $ N_m $ & Number of MTC devices served by each MBS \\
   MTC  & Machine-type Communication  \\
   BS  & Base Station  \\
   MIMO  & Multiple-input Multiple-output  \\
   MEC  & Mobile Edge Computing  \\
   IIoT  & Industrial Internet of Things  \\
   IDE  & Improved Differential Evolution  \\
   NOMA  & Non-orthogonal Multiple Access  \\
   IWSN  & Industrial Wireless Sensor Network \\
   OBDC  & Optimal Back-off Delay Control  \\
   LCR  &  Level Crossing Rate \\
   TDMA &  Time Division Multiple Access \\
   ISM  & Industrial, Scientific and Medical  \\
   HTC  & Human-type Communication  \\
   MBS &  BS in MTC network  \\
   HBS &  BS in HTC network  \\
   PPP  & Poisson Point Process  \\
   UE   & User Equipment \\
   SINR & Signal to Interference plus Noise Ratio\\
   SNR & Signal to Noise Ratio\\
   CDF  & Cumulative Distribution Function  \\
   PDF  & Probability Density Function  \\
   FCFS  & First Come First Served  \\
   LTE-U & LTE-Unlicensed\\
   5G NR-U  & 5G New Radio in Unlicensed Spectrum \\
   \hline
   \hline
  \end{tabular}
 \end{center}
\end{table}

\section{System Model}

\begin{figure}
\centering
\includegraphics[width=0.49\textwidth]{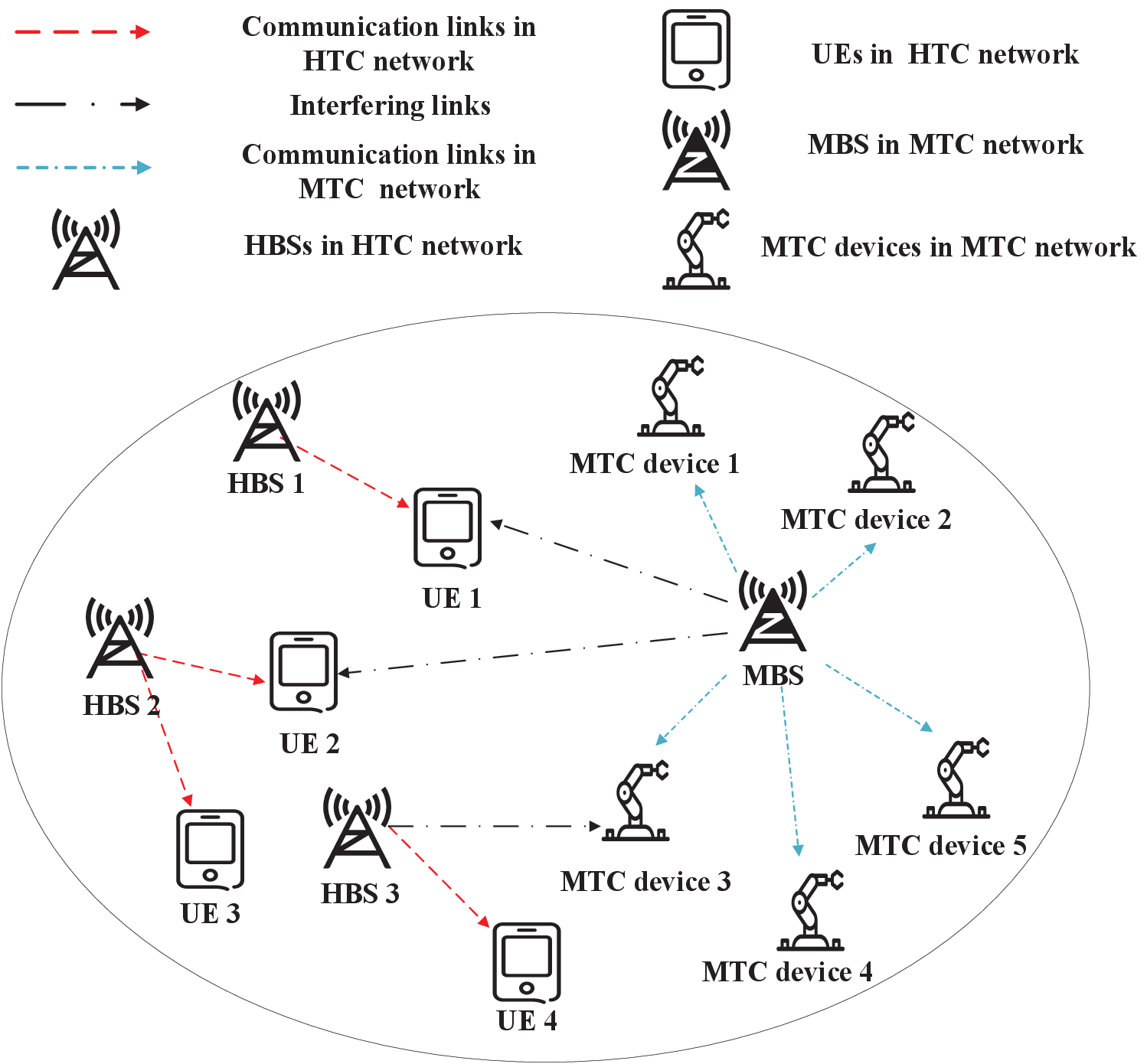}
\caption{System model of spectrum sharing between MTC network and HTC network.}
\label{fig_system_model}
\end{figure}

\begin{figure}
\centering
\includegraphics[width=0.35\textwidth]{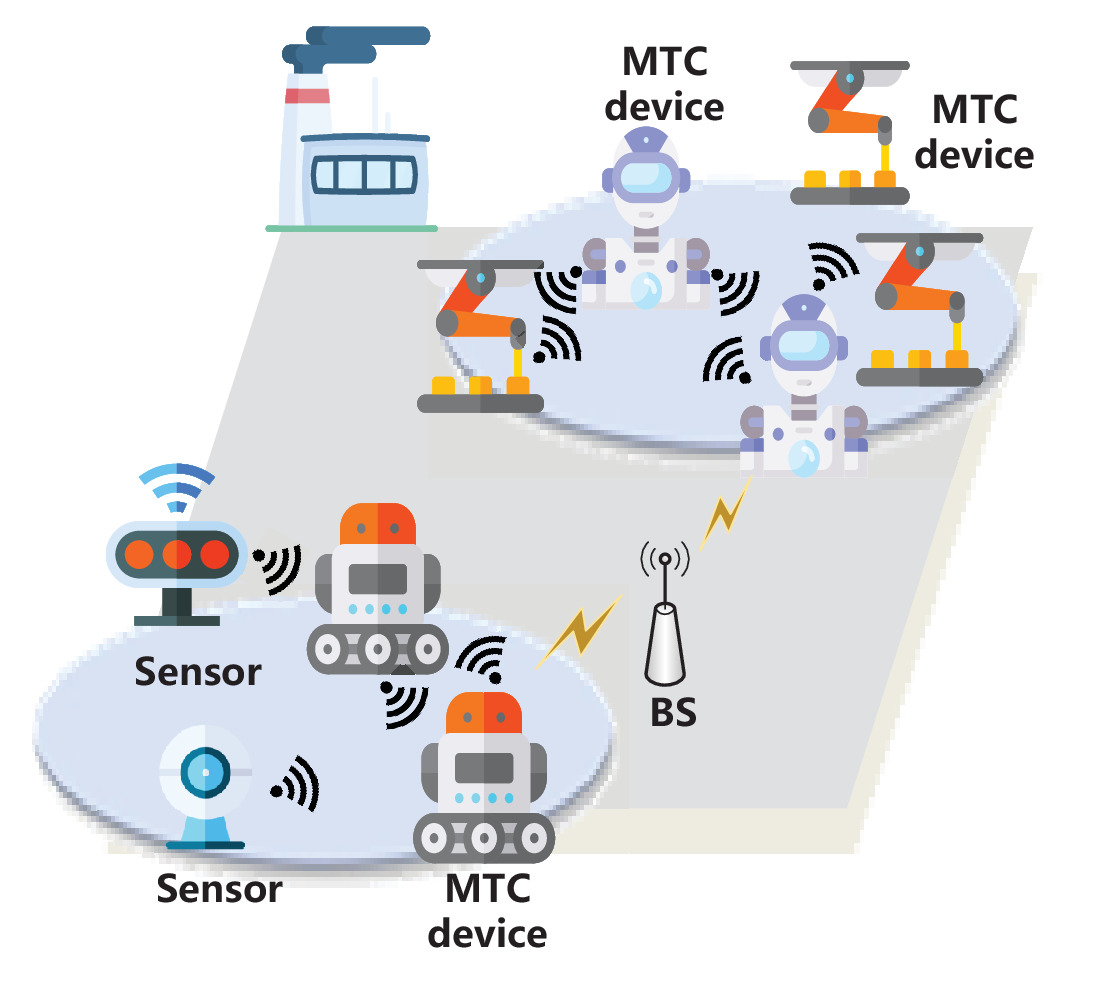}
\caption{Machine-type communication in a closed workshop.}
\label{fig_industrial}
\end{figure}

As shown in Fig. \ref{fig_system_model}, the system model considers two types of networks, namely MTC network and HTC network.
MTC network shares the spectrum of HTC network as long as it causes a small amount of interference to HTC network.
The spectrum bandwidth of HTC network is expressed as ${B_h}$.
The proprietary spectrum bandwidth of MTC network is denoted as ${B_m}$.
There are multiple BSs in HTC network, denoted by HTC BSs (HBSs).
The set of HBSs can be expressed as $\phi _h$ and the location distribution of HBSs follows a homogeneous Poisson Point Process
(PPP) with density $\lambda_h$.
Each HBS serves $N_h$ UEs, and UEs are uniformly distributed within the service area of HBS. As shown in  Fig. \ref{fig_industrial},
only one BS is deployed to provide communication services in a closed workshop. All MTC devices communicate through this BS.
Therefore, the MTC network in Fig. \ref{fig_system_model} consists of one MTC BS (MBS) and $N_m$ MTC devices,
where the MBS adopts the shared spectrum to serve the MTC devices.
The set of MTC devices can be represented as $\phi _m $.
The location distribution of MTC devices follows a homogeneous PPP with density $\lambda_{mu}$.

\indent Since UEs and MTC devices follow homogeneous PPP, according to the Slyvniak’s theorem \cite{stoyan2013stochastic}, the outage probability of HTC network and the delay deterministic feature of MTC network can be evaluated by analyzing the performance of the typical UE and the typical MTC device. We assume that the typical UE and the typical MTC device are distributed at the origin of the two-dimensional plane where the HTC and MTC networks are deployed  \cite{9115898,7756327}. It is assumed that the small-scale fading of the communication channel is Rayleigh fading, namely the channel gain obeys a negative exponential distribution with unit mean.

\section{Performance of HTC Network}
For HTC network, the success probability of communication is a crucial performance metric. It is defined as the probability that the Signal to Interference plus Noise Ratio (SINR) at the receiver is greater than a threshold \cite{7842290,9261468}. On the contrary, the outage probability is the probability that the SINR of the receiver is smaller than the threshold.

\subsection{Outage probability without spectrum sharing}
Firstly, the performance of HTC network without spectrum sharing is analyzed. Spectrum is evenly allocated to guarantee simultaneous communication between HBS and ${N_h}$ UEs. The SINR of the typical UE is expressed as follows.
\begin{equation}\label{exp1}
{\gamma _h}= \frac{{{P_h}x_0^{ - \alpha }{h_0}}}{{\sum\limits_{i \in {\phi _h}\backslash \{ 0\} } {{P_h}x_i^{ - \alpha }{h_i} + \frac{{N{B_h}}}{{{N_h}}}} }},\\
\end{equation}
where $P_h$ is the transmit power of HBSs. $x_0$ is the distance between the typical UE and C.
${x_i}$ is the distance between the typical UE and the $i$th HBS. ${h_i}$ is the channel gain of the typical UE communicating with the $i$th HBS. $N$ is the power spectral density of noise. $\alpha$ is the path loss factor. \\ \indent According to the definition, the success probability of communication $P_{suc}$ for the typical UE is
\begin{equation}\label{exp2}
{P_{suc}} = P( {{\gamma _h} > {\theta _h}}),
\end{equation}
where $\theta_h$ is the threshold of SINR. Therefore, the outage probability of communication $P_{out}$ is
\begin{equation}\label{exp3}
{P_{out}} = 1 - P( {{\gamma _h} > {\theta _h}}).
\end{equation}
\indent Then, we have Lemma 1.

\begin{lemma}\label{lemma1}
The outage probability of communication for the typical UE in HTC network without spectrum sharing is \cite{7842290}
\begin{equation}\label{exp4}
{P_{out}} = 1 - \exp ({ - x_0^\alpha {\theta _h}\frac{{N{B_h}}}{{{N_h}{P_h}}} - {\lambda _h}\frac{{2{\pi ^2}x_0^2{\theta _h}^{\frac{2}{\alpha }}}}{{\alpha \sin ( {\frac{{2\pi }}{\alpha }} )}}} ).
\end{equation}
\end{lemma}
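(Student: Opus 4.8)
The plan is to compute $P_{out} = 1 - P(\gamma_h > \theta_h)$ directly from the SINR expression \eqref{exp1} using standard tools from stochastic geometry, following the well-known recipe for Poisson cellular networks. First I would condition on the distance $x_0$ from the typical UE to its serving HBS and on the interference-plus-noise term, and rewrite the event $\{\gamma_h > \theta_h\}$ as $\{h_0 > \theta_h x_0^\alpha P_h^{-1}(I + N B_h/N_h)\}$, where $I = \sum_{i \in \phi_h \setminus \{0\}} P_h x_i^{-\alpha} h_i$ is the aggregate interference. Since $h_0$ is exponentially distributed with unit mean, taking the expectation over $h_0$ turns the success probability into $\mathbb{E}\bigl[\exp(-\theta_h x_0^\alpha P_h^{-1}(I + N B_h/N_h))\bigr]$. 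The noise contribution factors out immediately as $\exp(-\theta_h x_0^\alpha N B_h/(N_h P_h))$, which already matches the first term in the exponent of \eqref{exp4}.

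\indent The remaining factor is the Laplace transform of the interference, $\mathcal{L}_I(s)$ evaluated at $s = \theta_h x_0^\alpha / P_h$. Here I would invoke the probability generating functional (PGFL) of the homogeneous PPP $\phi_h$ together with the i.i.d.\ exponential marks $h_i$: $\mathcal{L}_I(s) = \exp\bigl(-\lambda_h \int_{\mathbb{R}^2} \bigl(1 - \mathbb{E}_h[e^{-s P_h \|x\|^{-\alpha} h}]\bigr)\,dx\bigr)$. Averaging over $h \sim \text{Exp}(1)$ gives $1 - \mathbb{E}_h[e^{-s P_h \|x\|^{-\alpha} h}] = \frac{s P_h \|x\|^{-\alpha}}{1 + s P_h \|x\|^{-\alpha}}$. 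Converting to polar coordinates, the integral becomes $2\pi \int_0^\infty \frac{s P_h r^{1-\alpha}}{1 + s P_h r^{-\alpha}}\,dr$; the substitution $u = (s P_h)^{-2/\alpha} r^2$ (or equivalently $t = r^\alpha/(s P_h)$) reduces it to a Beta-function / standard integral of the form $\int_0^\infty \frac{v^{2/\alpha - 1}}{1+v}\,dv = \frac{\pi}{\sin(2\pi/\alpha)}$, yielding $(sP_h)^{2/\alpha} \cdot \frac{2\pi^2}{\alpha \sin(2\pi/\alpha)}$. Substituting $s P_h = \theta_h x_0^\alpha$ makes $(sP_h)^{2/\alpha} = x_0^2 \theta_h^{2/\alpha}$, so the exponent of $\mathcal{L}_I$ becomes exactly $-\lambda_h \frac{2\pi^2 x_0^2 \theta_h^{2/\alpha}}{\alpha \sin(2\pi/\alpha)}$, matching the second term in \eqref{exp4}. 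Multiplying the two factors and subtracting from $1$ gives the claimed formula.

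\indent The main obstacle—really the only non-mechanical point—is the evaluation of the radial interference integral and confirming it converges (which requires $\alpha > 2$, implicitly assumed) and equals $\frac{\pi}{\sin(2\pi/\alpha)}$ after the change of variables; one must be careful with the exponents so that the $x_0$-dependence comes out as $x_0^2$ rather than some other power, and that the $\theta_h$ exponent is $2/\alpha$. A secondary subtlety is whether the result is stated conditionally on $x_0$ or after averaging over it; reading \eqref{exp4}, the expression still contains $x_0$, so no further integration over the serving-distance distribution is needed and the proof terminates at this point. Since the paper cites \cite{7842290} for this lemma, I would present the derivation compactly, emphasizing the PGFL step and the integral identity, and simply note that the noise term and the interference term combine multiplicatively because the fading variables are independent across links.
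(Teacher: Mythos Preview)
Your proposal is correct and follows essentially the same route as the paper's own proof: factor the success probability into a noise term and the Laplace transform of the aggregate interference using the exponential fading of $h_0$, then evaluate that Laplace transform via the PGFL of the PPP and the standard integral $\int_0^\infty \frac{v^{2/\alpha-1}}{1+v}\,dv=\pi/\sin(2\pi/\alpha)$. The paper merely cites \cite{7842290} for these steps rather than spelling out the PGFL and the Beta-integral evaluation as you do, but the structure and all intermediate expressions coincide.
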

\begin{proof}
Substituting (\ref{exp1}) into (9) in \cite{7842290}, the success probability of communication is
\begin{equation}\label{exp5}
\begin{aligned}
{P_{suc}} = P(  \frac{{{P_h}x_0^{ - \alpha }{h_0}}}{{\sum\limits_{i \in {\phi _h}\backslash \{ 0\} } {{P_h}x_i^{ - \alpha }{h_i} + \frac{{N{B_h}}}{{{N_h}}}} }}> {\theta _h} )\\
 = \exp ( { - x_0^\alpha {\theta _h}\frac{{N{B_h}}}{{{N_h}{P_h}}}}) \cdot {L_{{I_h}}}( {x_0^\alpha {\theta _h}}),
\end{aligned}
\end{equation}
where ${L_{{I_h}}}( {x_0^\alpha {\theta _h}} )$ is the Laplace transform of ${I_h}$. ${I_h}$ is expressed as
\begin {equation}\label{exp6}
 {I_h} = \sum\limits_{i \in {\phi _h}\backslash \{ 0\} } {x_i^{ - \alpha }{h_i}} .
\end {equation}
\indent According to \cite{7842290}, ${L_{{I_h}}}( {x_0^\alpha {\theta _h}} )$ is expressed as
\begin{equation}\label{exp7}
\begin{aligned}
&{L_{{I_h}}}( {x_0^\alpha {\theta _h}} )\\
&= \exp ( { - \int_0^{2\pi } {\int_0^\infty  {( {1 - \frac{1}{{1{\rm{ + }}x_0^\alpha {\theta _h}{r^{ - \alpha }}}}} )} }  \cdot rdrd\theta } ) \\
&= \exp ( { - {\lambda _h}\frac{{2{\pi ^2}x_0^2{\theta _h}^{\frac{2}{\alpha }}}}{{\alpha \sin ( {\frac{{2\pi }}{\alpha }} )}}}).
\end{aligned}
\end{equation}
\indent Therefore, the success probability of communication for the typical UE is expressed as
\begin{equation}\label{exp8}
{P_{suc}} = \exp ( { - x_0^\alpha {\theta _h}\frac{{N{B_h}}}{{{N_h}{P_h}}} - {\lambda _h}\frac{{2{\pi ^2}x_0^2{\theta _h}^{\frac{2}{\alpha }}}}{{\alpha \sin ( {\frac{{2\pi }}{\alpha }} )}}} ).
\end{equation}
\indent According to (\ref{exp8}), the outage probability in Lemma 1 can be obtained.
\end{proof}
\indent It is revealed from Lemma 1 that the main factor affecting the success probability of communication without spectrum sharing is the interference emitted by HBSs due to spectrum reuse. Therefore, it is necessary to reasonably adjust the density of HBSs, so as to reduce interference and improve network performance.

\subsection{Outage probability with spectrum sharing}
With spectrum sharing, the spectrum band with bandwidth $B_h$ is used not only by HTC network, but also by MTC network. When the typical UE communicates with the HBS, the SINR is
\begin{equation}\label{exp9}
\gamma _h' = \frac{{{P_h}x_0^{ - \alpha }{h_0}}}{{\sum\limits_{i \in {\phi _h}\backslash \{ 0\} } {{P_h}x_i^{ - \alpha }{h_i} + {P_m'}y_0^{ - \alpha }{g_0} + \frac{{N{B_h}}}{{{N_h}}}} }},
\end{equation}
where  $P_m'$ is the transmit power when the MBS uses the shared spectrum to send data packets. $y_0$ is the distance of the typical UE from the MBS. ${{g_0}}$ is the channel gain of the interference link from MBS to the typical UE.

\indent Then, we have Lemma 2.

\begin{lemma}\label{lemma2}
The outage probability of communication for the typical UE in HTC network with spectrum sharing is \cite{7842290}

\begin{equation}\label{exp10}
\begin{aligned}
P_{out}' = &1 - \frac{1}{{1 + x_0^\alpha {\theta _h}\frac{{P_m'}}{{{P_h}}}y_0^{ - \alpha }}}\cdot\\
&\exp (- x_0^\alpha {\theta _h}\frac{{N{B_h}}}{{{N_h}{P_h}}}- {\lambda _h}\frac{{2{\pi ^2}x_0^2{\theta _h}^{\frac{2}{\alpha }}}}{{\alpha \sin ( {\frac{{2\pi }}{\alpha }})}} ).
\end{aligned}
\end{equation}
\end{lemma}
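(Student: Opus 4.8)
The plan is to follow the same route as the proof of Lemma~\ref{lemma1}, carrying one extra interference term — the one produced by the MBS on the shared band — through the computation. First I would write $P_{suc}' = P(\gamma_h' > \theta_h)$ and substitute (\ref{exp9}). Since $h_0$ is exponentially distributed with unit mean, conditioning on the HBS point process $\phi_h$ together with its fading variables $\{h_i\}$, on the MBS-to-UE channel gain $g_0$, and on the distance $y_0$, the event $\{\gamma_h' > \theta_h\}$ is equivalent to $\{h_0 > x_0^\alpha \theta_h ( I_h + \frac{P_m'}{P_h} y_0^{-\alpha} g_0 + \frac{N B_h}{N_h P_h} )\}$, where $I_h$ is the aggregate HBS interference of (\ref{exp6}). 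Taking expectations and using $P(h_0 > t) = e^{-t}$ then produces a product of three factors: the deterministic noise factor $\exp(-x_0^\alpha \theta_h \frac{N B_h}{N_h P_h})$, the term $E[\exp(-x_0^\alpha \theta_h I_h)]$, and the new term $E[\exp(-x_0^\alpha \theta_h \frac{P_m'}{P_h} y_0^{-\alpha} g_0)]$.

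Next I would invoke the mutual independence assumed in Section~II — the HBS process (with its fading), the MBS position (with $g_0$), and the thermal noise are independent — so that the expectation factorizes. The middle factor is exactly the Laplace transform $L_{I_h}(x_0^\alpha \theta_h)$ already evaluated in (\ref{exp7}), namely $\exp(-\lambda_h \frac{2\pi^2 x_0^2 \theta_h^{2/\alpha}}{\alpha \sin(2\pi/\alpha)})$; note that the HBS interference is unaffected by the presence of the MBS, so this factor is reused verbatim from Lemma~\ref{lemma1}. The last factor is the Laplace transform of a unit-mean exponential random variable evaluated at $s = x_0^\alpha \theta_h \frac{P_m'}{P_h} y_0^{-\alpha}$, hence equals $(1 + x_0^\alpha \theta_h \frac{P_m'}{P_h} y_0^{-\alpha})^{-1}$. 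Multiplying the three factors gives $P_{suc}'$, and $P_{out}' = 1 - P_{suc}'$ is precisely (\ref{exp10}).

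Both Laplace-transform evaluations are routine — one is imported from Lemma~\ref{lemma1}, the other is the textbook identity $1/(1+s)$ — so the only point requiring care is the conditioning/independence bookkeeping. Specifically, $y_0$ has to be treated as a conditioning variable (the typical UE sees the MBS at a fixed location, consistent with the single-MBS model of Section~II), and $g_0$ must be independent of $\phi_h$, of $\{h_i\}$, and of the noise, so that the joint expectation genuinely splits into the product of the three factors above. I would therefore make these independence assumptions explicit before factorizing; this is the one conceptual step, the rest being mechanical.
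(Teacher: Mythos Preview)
Your proposal is correct and follows essentially the same route as the paper: both factor $P_{suc}'$ into the noise term, the Laplace transform $L_{I_h}(x_0^\alpha\theta_h)$ of the HBS interference reused from Lemma~\ref{lemma1}, and the expectation over the exponential fading $g_0$ of the MBS link, which yields the $(1+x_0^\alpha\theta_h\frac{P_m'}{P_h}y_0^{-\alpha})^{-1}$ factor. The only difference is cosmetic---the paper invokes equations (27) and (30) of \cite{7842290} for the factorization and the $g_0$-expectation, whereas you spell out the conditioning and the Laplace-transform identity $1/(1+s)$ explicitly.
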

\begin{proof}
Substituting (\ref{exp9}) into (27) in \cite{7842290}, the expression for the success probability of communication is
\begin{equation}
\begin{aligned}
&P_{suc}'\\
&= P( {\gamma _h' > {\theta _h}} )\label{exp11}\\
&= \exp ( { - x_0^\alpha {\theta _h}\frac{{N{B_h}}}{{{N_h}{P_h}}}} ){L_{{I_h}}}( {x_0^\alpha {\theta _h}} )
{E_{{g_0}}}( { - x_0^\alpha {\theta _h}\frac{{{P_m'}}}{{{P_h}}}y_0^{ - \alpha }{g_0}}).
\end{aligned}
\end{equation}
\indent According to (30) in \cite{7842290}, the success probability of communication is
\begin{equation}\label{exp12}
\begin{aligned}
P_{suc}' =& \frac{1}{{1 + x_0^\alpha {\theta _h}\frac{{P_m'}}{{{P_h}}}y_0^{ - \alpha }}}\\
& \cdot\exp ( { - x_0^\alpha {\theta _h}\frac{{N{B_h}}}{{{N_h}{P_h}}} - {\lambda _h}\frac{{2{\pi ^2}x_0^2{\theta _h}^{\frac{2}{\alpha }}}}{{\alpha \sin( {\frac{{2\pi }}{\alpha }})}}}).
\end{aligned}
\end{equation}
\indent According to  (\ref{exp12}), the expression of the outage probability in Lemma 2 can be derived.
\end{proof}
It is revealed from Lemma 2 that with spectrum sharing, the transmit power of MBS has an impact on the performance of HTC network. Therefore, the transmit power of MBS needs to be adjusted to avoid serious interference to the UEs in HTC network, so as to enable the coexistence of HTC network and MTC network.

\indent With spectrum sharing, the increment of outage probability can be expressed as
\begin{equation}\label{exp13}
\delta  = \frac{{P_{out}' - {P_{out}}}}{{{P_{out}}}}.
\end{equation}

\section{Delay and Delay Jitter of MTC Network}

\indent For MTC network, ensuring real-time data transmission is the primary goal. Therefore, delay and jitter are key performance metrics to be analyzed for MTC network.\

\begin{figure}
\centering
\includegraphics[width=0.49\textwidth]{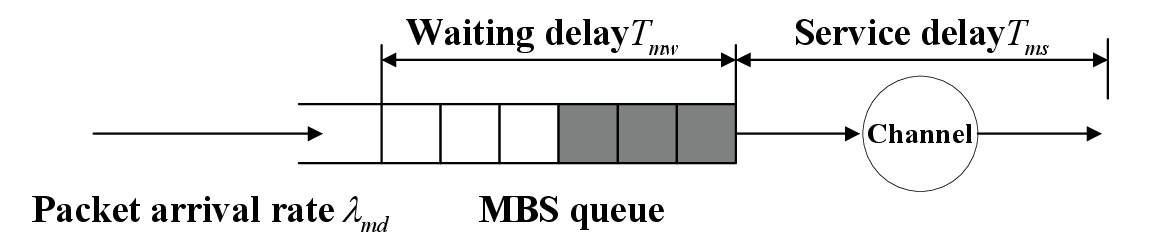}
\caption{Queueing model at MBS in MTC network.}
\vspace{1em}
\rule[5pt]{9cm}{0.05em}
\label{fig_queue model}
\end{figure}

\indent In MTC network, the delay experienced by a data packet includes service delay and waiting delay. Service delay refers to the delay it takes for a data packet to be sent from the transmitter to the receiver \cite{benvenuto2011principles}. Waiting delay is the delay for a data packet to enter the queue until it is transmitted \cite{benvenuto2011principles}. The downlink communication process is shown in Fig. \ref{fig_queue model}. The data packet enters the queue of MBS, and the newly arrived data packet needs to wait until it is transmitted. In this paper, the queueing system at the MBS is modeled as M/G/1 queueing model, where the arrival of data packets at the MBS follows a Poisson process with arrival rate ${\lambda _{md}}$. It is assumed that the service principle of the data packets is First Come First Served (FCFS).

\indent In queueing theory, it is assumed that the arrival rate of packets is $\lambda_{md} $, the service rate is $\mu$, the total number of packets in the queue is $L$, the number of packets in the waiting state is ${L_w}$, and the number of packets being served is ${L_s}$. The average service delay of a packet is ${T_{ms}}$, the average waiting delay is ${T_{mw}}$, and the average sojourn delay is $T_m$. When the system is in a steady state, the following equations can be obtained by using Littles formula \cite{benvenuto2011principles}.

\begin{equation}\label{exp14}
\left\{ \begin{array}{l}
L = {L_w} + {L_s}\\
{T_{mw}} = \frac{{{L_w}}}{{{\lambda _{md}}}}\\
{T_{ms}} = \frac{{{L_s}}}{\mu }\\
{T_m} = {T_{ms}} + {T_{mw}}
\end{array} \right.
\end{equation}

\indent Using queueing theory, this section analyzes the delay and delay jitter of the typical MTC device in MTC network with and without spectrum sharing.
\subsection{ Performance of MTC network without proprietary spectrum}
  When MBS and its MTC devices communicate using shared spectrum, the signals received by the MTC devices include useful signals from MBS, interference signals from HBS and noise. The SINR of the typical MTC device is
\begin{equation}\label{exp15}
{\gamma _{mo}} = \frac{{P_m'y_0^{ - \alpha }{k_0}}}{{\sum\limits_{i \in {\phi _h}} {{P_h}x_i^{ - \alpha }{h_i}} {\rm{ + }}N\frac{{{B_h}}}{{{N_m}}}}},
\end{equation}
where $P_m'$ is the transmit power of MBS using shared spectrum. $y_0$ is the distance between the typical MTC device and MBS. $k_0$ is the channel gain between the typical MTC device and MBS.
The channel capacity can be expressed as
\begin{equation}\label{exp16}
{C_m} = \frac{{{B_h}}}{{{N_m}}}{\log _2}({1{\rm{ + }}\frac{{P_m'y_0^{ - \alpha }{k_0}}}{{\sum\limits_{i \in {\phi _h}} {{P_h}x_i^{ - \alpha }{h_i}} {\rm{ + }}N\frac{{{B_h}}}{{{N_m}}}}}} ).
\end{equation}
Then, the service delay of data packet is
\begin{equation}\label{exp17}
{T_{mso}}{\rm{ = }}\frac{{{U_m}{N_m}}}{{{B_h}{{\log }_2}( {1{\rm{ + }}\frac{{P_m'y_0^{ - \alpha }{k_0}}}{{\sum\limits_{i \in {\phi _h}} {{P_h}x_i^{ - \alpha }{h_i}} {\rm{ + }}N\frac{{{B_h}}}{{{N_m}}}}}})}},
\end{equation}
where ${U_m}$ is the size of the data packet sent by the MBS to the typical MTC device.
\begin{Theo}
In the scenario where MTC network has no proprietary spectrum, the average delay experienced by a data packet in the downlink is
\begin{equation}\label{exp18}
\begin{aligned}
E({T_{mo}}) = &{t_{out}} - \int_0^{{t_{out}}} {{F_{{T_{mso}}}}( t )} dt \\
&+ \frac{{{\lambda _{md}}( {t_{out}^2 - 2\int_0^{{t_{out}}} {t{F_{{T_{mso}}}}( t)} dt} )}}{{2( {1 - {\lambda _{md}}( {{t_{out}} - \int_0^{{t_{out}}} {{F_{{T_{mso}}}}( t )} dt} )})}}.
\end{aligned}
\end{equation}
\indent The delay jitter of MTC network is
\begin{equation}\label{exp19}
\begin{aligned}
{J_{mo}} = &{(\frac{{{\lambda _{md}}(t_{out}^2 - 2\int_0^{{t_{out}}} {t{F_{{T_{mso}}}}(t)} dt)}}{{2(1 - {\lambda _{md}}({t_{out}} - \int_0^{{t_{out}}} {{F_{{T_{mso}}}}(t)} dt))}})^2} \\
&+ \frac{{{\lambda _{md}}(t_{{t_{out}}}^3 - 3\int_0^{{t_{out}}} {{t^2}{F_{{T_{mso}}}}(t)} dt)}}{{2(1 - {\lambda _{md}}({t_{out}} - \int_0^{{t_{out}}} {{F_{{T_{mso}}}}(t)} dt))}},
\end{aligned}
\end{equation}
\indent where
\begin{tiny}
\begin{equation}\label{exp20}
{F_{{T_{mso}}}}(t)=  \exp ({- y_0^\alpha ( {{2^{\frac{{{U_m}{N_m}}}{{{B_h}t}}}} - 1} )\frac{{N{B_h}}}{{P_m'{N_m}}} - {\lambda _h}\frac{{2{\pi ^2}y_0^2{{( {\frac{{{P_h}}}{{P_m'}}( {{2^{\frac{{{U_m}{N_m}}}{{{B_h}t}}}} - 1} )} )}^{\frac{2}{\alpha }}}}}{{\alpha \sin( {\frac{{2\pi }}{\alpha }})}}}.
\end{equation}
\end{tiny}
\end{Theo}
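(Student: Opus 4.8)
The plan is to establish the theorem in two stages. \emph{Stage~1} derives the cumulative distribution function $F_{T_{mso}}(t)$ in (\ref{exp20}) by a coverage-probability argument of the same type used for Lemma~1, and \emph{Stage~2} feeds the resulting (truncated) service-time distribution into the Pollaczek--Khinchine analysis of the M/G/1 queue at the MBS to obtain the mean delay (\ref{exp18}) and the jitter (\ref{exp19}).

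For Stage~1, start from (\ref{exp17}). Since $\log_2(\cdot)$ is strictly increasing, $\{T_{mso}\le t\}=\{\gamma_{mo}\ge\theta_{mo}(t)\}$ with $\theta_{mo}(t):=2^{U_mN_m/(B_ht)}-1$, so $F_{T_{mso}}(t)=P(\gamma_{mo}\ge\theta_{mo}(t))$. Taking the expectation over the Rayleigh gain $k_0\sim\mathrm{Exp}(1)$ turns this into $E[\exp(-\frac{y_0^\alpha\theta_{mo}(t)}{P_m'}(\sum_{i\in\phi_h}P_hx_i^{-\alpha}h_i+\frac{NB_h}{N_m}))]$, which factorises into the deterministic noise term $\exp(-\frac{y_0^\alpha\theta_{mo}(t)NB_h}{P_m'N_m})$ times the Laplace transform $L_{I_h}(\frac{y_0^\alpha\theta_{mo}(t)P_h}{P_m'})$ of the aggregate HBS interference $I_h=\sum_{i\in\phi_h}x_i^{-\alpha}h_i$. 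Applying the probability generating functional of the homogeneous PPP $\phi_h$ together with the $\mathrm{Exp}(1)$ fading, exactly as in (\ref{exp7}), gives $L_{I_h}(s)=\exp(-\lambda_h\frac{2\pi^2 s^{2/\alpha}}{\alpha\sin(2\pi/\alpha)})$; substituting $s^{2/\alpha}=y_0^{2}((P_h/P_m')\theta_{mo}(t))^{2/\alpha}$ yields (\ref{exp20}). A detail worth noting: unlike the typical UE in Lemma~1, the typical MTC device is not a point of $\phi_h$, so the interference sum runs over all of $\phi_h$ (no point excluded), which is why (\ref{exp15}) carries $\sum_{i\in\phi_h}$ rather than $\sum_{i\in\phi_h\setminus\{0\}}$.

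For Stage~2, model the effective per-packet service time as the truncation $S=\min(T_{mso},t_{out})$, since $t_{out}$ is the longest admissible service delay; these service times form an i.i.d. sequence (independent small-scale fading per transmission, with $y_0$ and the interferer PPP already averaged out in (\ref{exp20})), so the queue at the MBS is genuinely M/G/1. For a nonnegative $Y$ and a constant $c$ one has $E[(\min(Y,c))^n]=\int_0^c nt^{n-1}P(Y>t)\,dt=c^n-n\int_0^c t^{n-1}F_Y(t)\,dt$; taking $Y=T_{mso}$, $c=t_{out}$ and $n=1,2,3$ gives $E[S]=t_{out}-\int_0^{t_{out}}F_{T_{mso}}(t)\,dt$, $E[S^2]=t_{out}^2-2\int_0^{t_{out}}tF_{T_{mso}}(t)\,dt$ and $E[S^3]=t_{out}^3-3\int_0^{t_{out}}t^2F_{T_{mso}}(t)\,dt$. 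With utilisation $\rho=\lambda_{md}E[S]$, the Pollaczek--Khinchine formula gives the mean waiting delay $E[T_{mw}]=\lambda_{md}E[S^2]/(2(1-\rho))$, and $E[T_{mo}]=E[S]+E[T_{mw}]$ by the sojourn-delay decomposition in (\ref{exp14}), which is exactly (\ref{exp18}). The jitter (\ref{exp19}) is the mean-square (equivalently variance) of the waiting delay: differentiating the Pollaczek--Khinchine transform $\widetilde W(s)=(1-\rho)s/(s-\lambda_{md}(1-\widetilde S(s)))$ twice at $s=0$ produces a $(E[T_{mw}])^2$ term plus a term proportional to $\lambda_{md}E[S^3]/(1-\rho)$, which has the shape of (\ref{exp19}).

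The queueing algebra in Stage~2 is routine once the moments of $S$ are available, so I expect the main obstacles to be: (i) evaluating the stochastic-geometry integral for $L_{I_h}$ correctly --- getting the $s^{2/\alpha}$ scaling, the Beta-type integral $\int_0^\infty v^{2/\alpha-1}/(1+v)\,dv=\pi/\sin(2\pi/\alpha)$, and the correct index set $\phi_h$; and (ii) justifying the truncation $S=\min(T_{mso},t_{out})$ as the operative service-time model and checking the stability condition $\lambda_{md}(t_{out}-\int_0^{t_{out}}F_{T_{mso}}(t)\,dt)<1$ under which the M/G/1 steady state --- and hence the use of (\ref{exp14}) and of the Pollaczek--Khinchine formulas --- is valid.
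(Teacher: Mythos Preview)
Your proposal is correct and follows essentially the same route as the paper's Appendix~A: a coverage-probability computation for $F_{T_{mso}}$ (paralleling Lemma~1) followed by the Pollaczek--Khinchine mean and higher moments for the M/G/1 waiting time. Two small differences are worth flagging. First, the paper obtains the truncated service-time moments by explicitly conditioning on $\{T_{mso}<t_{out}\}$ versus $\{T_{mso}\ge t_{out}\}$ and then integrating by parts, whereas your one-line tail identity $E[(\min(Y,c))^n]=c^n-n\int_0^c t^{n-1}F_Y(t)\,dt$ is a cleaner shortcut to the same expressions. Second, the paper's proof actually \emph{defines} jitter as $D(T_{mso})+D(T_{mwo})$, but only the waiting-time variance $D(T_{mwo})$ appears in the stated formula~(\ref{exp19}); your reading of (\ref{exp19}) as the waiting-delay variance therefore matches the theorem as stated, even though the paper's own derivation nominally adds a $D(T_{mso})$ term on top.
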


\begin{proof}
The proof of Theorem 1 is in Appendix A.
\end{proof}

\indent When spectrum sharing is adopted, the transmit power of MBS needs to be controlled to reduce the interference to HTC network. In order to ensure that the outage probability of HTC network with spectrum sharing is lower than the corresponding threshold $\varepsilon $, namely
\begin{equation}\label{exp21}
P_{out}' < \varepsilon,
\end{equation}
the transmit power $P_m'$ of MBS in MTC network needs to satisfy the following inequality
\begin{equation} \label{exp22}
\begin{aligned}
P_m' \le &( {\frac{1}{{( {1 - \varepsilon } )\exp ( {x_0^\alpha {\theta _h}\frac{{N {B_h}}}{{{N_h}{P_h}}} + {\lambda _h}\frac{{2{\pi ^2}x_0^2{\theta _h}^{\frac{2}{\alpha }}}}{{\alpha \sin ( {\frac{{2\pi }}{\alpha }} )}}} )}} - 1})\\
&\cdot x_0^{ - \alpha }\frac{{{P_h}}}{{{\theta _h}}}y_0^\alpha.
\end{aligned}
\end{equation}
\indent Meanwhile, the transmit power of MBS cannot exceed ${P_{\max }}$, namely
\begin{equation} \label{exp23}
P_m' \le {P_{\max }}.
\end{equation}
\indent The maximum allowable transmit power of MBS when MTC network shares spectrum with HTC network can be obtained by (\ref{exp22}) and (\ref{exp23}), and the maximum delay and jitter of MTC network when the interference of MBS to HTC network is within a certain range can be obtained by substituting the value of transmit power into (\ref{exp18}) and (\ref{exp19}).

\subsection{ Performance of MTC network with proprietary spectrum}
\subsubsection{Delay and jitter without spectrum sharing}
 When the typical MTC device communicates with the MBS, the received signals include the desired signal from the MBS and noise. The Signal to Noise Ratio (SNR) of the typical MTC device can be expressed as
\begin{equation}\label{exp24}
\gamma _m= \frac{{{P_m}{N_m}y_0^{ - \alpha }{{\rm{g}}_0}}}{{N{B_m}}}.
\end{equation}
\indent When communicating with the MBS, the typical MTC device uses the spectrum with bandwidth $\frac{{{B_m}}}{{{N_m}}}$. The expression of the service delay is
\begin{equation}\label{exp25}
{T_{ms}}{\rm{ = }}\frac{{{U_m}{N_m}}}{{{B_m}{{\log }_2}( {1 + \gamma _m})}}.
\end{equation}

\indent Substituting (\ref{exp24}) into (\ref{exp25}), the service delay of the typical MTC device is
\begin{equation}\label{exp26}
{T_{ms}}{\rm{ = }}\frac{{{U_m}{N_m}}}{{{B_m}{{\log }_2}( {1{\rm{ + }}\frac{{{P_m}{N_m}y_0^{ - \alpha }{{\rm{g}}_0}}}{{N{B_m}}}} )}},
\end{equation}
where ${B_m}$ is the bandwidth of proprietary spectrum of MBS. ${P_m}$ is the transmit power when the MBS uses the proprietary spectrum. ${y_0}$ is the distance from the typical MTC device to the MBS. ${g_0}$ is the {channel gain when the typical MTC device communicates with the MBS}.

\indent Then, we have Theorem 2.
\begin{Theo}
Without spectrum sharing, the average delay experienced by a data packet in the downlink is
\begin{equation}\label{exp27}
\begin{aligned}
{T_m} &={t_{out}} - \int_0^{{t_{out}}} {{F_{{T_{ms}}}}( t )} dt\\
&{\rm{ + }}\frac{{{\lambda _{md}}( {t_{out}^2 - 2\int_0^{{t_{out}}} {t{F_{{T_{ms}}}}( t )} dt} )}}{{2( {1 - {\lambda _{md}}( {{t_{out}} - \int_0^{{t_{out}}} {{F_{{T_{ms}}}}( t )} dt} )} )}}.
\end{aligned}
\end{equation}
\indent Delay jitter of MTC network is
\begin{equation}\label{exp28}
\begin{aligned}
{J_m} = &t_{out}^2 - 2\int_0^{{t_{out}}} {t{F_{{T_{ms}}}}( t )} dt - {( {{t_{out}} - \int_0^{{t_{out}}} {{F_{{T_{ms}}}}( t )} dt} )^2}\\
 &+ {( {\frac{{{\lambda _{md}}( {t_{out}^2 - 2\int_0^{{t_{out}}} {t{F_{{T_{ms}}}}( t )} dt} )}}{{2( {1 - {\lambda _{md}}( {{t_{out}} - \int_0^{{t_{out}}} {{F_{{T_{ms}}}}( t )} dt} )} )}}} )^2}\\
 &+ \frac{{{\lambda _{md}}( {t_{{t_{out}}}^3 - 3\int_0^{{t_{out}}} {{t^2}{F_{{T_{ms}}}}( t )} dt} )}}{{2( {1 - {\lambda _{md}}( {{t_{out}} - \int_0^{{t_{out}}} {{F_{{T_{ms}}}}( t )} dt} )} )}},
\end{aligned}
\end{equation}
\indent where
\begin{equation}\label{exp29}
{F_{{T_{ms}}}}( t ){\rm{ = }}\exp ( { - y_0^\alpha ( {{2^{\frac{{{U_m}{N_m}}}{{{B_m}t}}}} - 1} )\frac{{N{B_m}}}{{{P_m}{N_m}}}} ).
\end{equation}
\end{Theo}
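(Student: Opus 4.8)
The plan is to exploit the M/G/1 structure established in Section IV together with the Pollaczek--Khinchine formula, exactly as in the proof of Theorem 1 (Appendix A), the only differences being that the service-time distribution is now driven by the SNR expression \eqref{exp24} rather than the SINR expression \eqref{exp15}, and that the interference term is absent. First I would compute the CDF of the service delay $T_{ms}$. Starting from \eqref{exp26}, the event $\{T_{ms}\le t\}$ is equivalent to $\{\log_2(1+\gamma_m)\ge U_mN_m/(B_m t)\}$, i.e. to $\{g_0\ge y_0^\alpha(2^{U_mN_m/(B_mt)}-1)\,NB_m/(P_mN_m)\}$. Since $g_0$ is unit-mean exponential, this probability is $\exp\!\big(-y_0^\alpha(2^{U_mN_m/(B_mt)}-1)\,NB_m/(P_mN_m)\big)$, which is exactly the claimed $F_{T_{ms}}(t)$ in \eqref{exp29}. (Note that because service is capped at the longest service delay $t_{out}$, one actually works with the truncated service time $\min(T_{ms},t_{out})$; I return to this below.)

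Next I would assemble the moments of the (truncated) service time needed for Pollaczek--Khinchine. With $S=\min(T_{ms},t_{out})$ one has $E(S)=\int_0^{t_{out}}P(S>t)\,dt=\int_0^{t_{out}}\big(1-F_{T_{ms}}(t)\big)dt=t_{out}-\int_0^{t_{out}}F_{T_{ms}}(t)\,dt$, and similarly, via $E(S^k)=\int_0^{t_{out}}k t^{k-1}P(S>t)\,dt$, one gets $E(S^2)=t_{out}^2-2\int_0^{t_{out}}t F_{T_{ms}}(t)\,dt$ and $E(S^3)=t_{out}^3-3\int_0^{t_{out}}t^2 F_{T_{ms}}(t)\,dt$. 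These are precisely the bracketed quantities appearing in \eqref{exp27}--\eqref{exp28}. The Pollaczek--Khinchine mean-value formula then gives the mean waiting delay $T_{mw}=\lambda_{md}E(S^2)/\big(2(1-\lambda_{md}E(S))\big)$, and adding the mean service delay $E(S)$ yields \eqref{exp27}. For the jitter \eqref{exp28}, I would interpret $J_m$ as the variance of the total sojourn delay $T_m=S+W$ (with $S$ and $W$ treated as the relevant independent contributions in the M/G/1 analysis): $\mathrm{Var}(T_m)=\mathrm{Var}(S)+\mathrm{Var}(W)$. Here $\mathrm{Var}(S)=E(S^2)-E(S)^2$ supplies the first two lines of \eqref{exp28}, while $\mathrm{Var}(W)$ is obtained from the known M/G/1 waiting-time formula, whose second moment contributes the $\big(\lambda_{md}E(S^2)/(2(1-\lambda_{md}E(S)))\big)^2$ term and the $\lambda_{md}E(S^3)/\big(2(1-\lambda_{md}E(S))\big)$ term — matching the last two lines of \eqref{exp28}. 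Substituting the integral expressions for $E(S),E(S^2),E(S^3)$ then produces the stated closed forms, and \eqref{exp29} follows from the first step.

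The main obstacle I expect is bookkeeping rather than conceptual: carefully justifying the truncation at $t_{out}$ (so that $P(S>t)=1-F_{T_{ms}}(t)$ for $t<t_{out}$ and $S$ never exceeds $t_{out}$, which is what makes the upper limits of all integrals equal to $t_{out}$ and collapses the tail terms), and correctly citing the variant of the Pollaczek--Khinchine formula that yields the waiting-time variance with the third service moment — keeping the algebra consistent so that the assembled variance reproduces \eqref{exp28} exactly. Since Appendix A already carries out the analogous computation for the spectrum-sharing case, I would mirror that argument verbatim, simply replacing $F_{T_{mso}}$ by $F_{T_{ms}}$ and dropping the interference Laplace-transform factor, so the write-up reduces to "the proof is identical to that of Theorem 1 with \eqref{exp24} in place of \eqref{exp15}" plus the short derivation of \eqref{exp29}.
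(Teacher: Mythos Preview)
Your proposal is correct and follows essentially the same approach as the paper: Appendix~B derives \eqref{exp29} exactly as you do (via the exponential tail of $g_0$) and then defers entirely to Appendix~A, which is precisely your plan of mirroring the Theorem~1 argument with $F_{T_{ms}}$ in place of $F_{T_{mso}}$. The only cosmetic difference is that the paper obtains the truncated moments by conditioning on $\{T_{ms}<t_{out}\}$ and then recombining (their equations~(40)--(46)), whereas you go directly via the survival-function identity $E(S^k)=\int_0^{t_{out}}kt^{k-1}P(S>t)\,dt$; these are equivalent after an integration by parts.
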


\begin{proof}
The proof of Theorem 2 is in Appendix B.
\end{proof}

\subsubsection{Delay and jitter with spectrum sharing}
When the MBS communicates with its MTC device using proprietary spectrum, the SNR of the typical MTC device is
\begin{equation}\label{exp30}
\gamma _{m1}' = \frac{{{P_m}y_0^{ - \alpha }{g_0}}}{{N\frac{{{B_m}}}{{{N_m}}}}} = \frac{{{P_m}{N_m}y_0^{ - \alpha }{g_0}}}{{N{B_m}}}.
\end{equation}
\indent When the MBS communicates with the typical MTC device using the shared spectrum, the SINR of the typical MTC device can be expressed as
\begin{equation}\label{exp31}
\gamma _{m2}' = \frac{{{P_m'}y_0^{ - \alpha }{k_0}}}{{\sum\limits_{i \in {\phi _h}} {{P_h}x_i^{ - \alpha }{h_i}} {\rm{ + }}\frac{{N{B_h}}}{{{N_m}}}}},
\end{equation}
where ${k_0}$ is the channel gain when the typical MTC device communicates with the MBS. ${x_i}$ is the distance between the typical MTC device and the $i$th HBS. ${h_i}$ is the channel gain of the interference link from the MBS to the typical MTC device.

\indent The channel capacity when the typical MTC device communicates with the MBS can be expressed as
\begin{equation}\label{exp32}
\begin{aligned}
C_{_{ms}}' = & \frac{{{B_h}}}{{{N_m}}}{\log _2}( {1{\rm{ + }}\frac{{{P_m'}y_0^{ - \alpha }{k_0}}}{{\sum\limits_{i \in {\phi _h}} {{P_h}x_i^{ - \alpha }{h_i}} + \frac{{N{B_h}}}{{{N_m}}}}}} )\\
 &+ \frac{{{B_m}}}{{{N_m}}}{\log _2}( {1{\rm{ + }}\frac{{{P_m}{N_m}y_0^{ - \alpha }{g_0}}}{{N{B_m}}}} ).
\end{aligned}
\end{equation}
\indent The service delay is
\begin{small}
\begin{equation}\label{exp33}
T_{ms}'{\rm{ = }}\frac{{{U_m}}}{{\frac{{{B_h}}}{{{N_m}}}{{\log }_2}( {1{\rm{ + }}\frac{{{P_m'}y_0^{ - \alpha }{k_0}}}{{\sum\limits_{i \in {\phi _h}} {{P_h}x_i^{ - \alpha }{h_i}}  + \frac{{N{B_h}}}{{{N_m}}}}}}) + \frac{{{B_m}}}{{{N_m}}}{{\log }_2}( {1{\rm{ + }}\frac{{{P_m}{N_m}y_0^{ - \alpha }{g_0}}}{{N{B_m}}}})}}.
\end{equation}
\end{small}
Then, we have Theorem 3.
\begin{spacing}{1.2}
\begin{Theo}
With spectrum sharing, the average delay experienced by a data packet in the downlink is
\begin{footnotesize}
\begin{equation}\label{exp34}
\begin{aligned}
E( {T_m'} )= &\int_0^{{t_{out}}} { {\int_0^{\frac{{{U_m}{N_m}}}{t}} {\;{F_1}( \tau  ) \cdot {f_2}( {\frac{{{U_m}{N_m}}}{t} - \tau } )} d\tau } dt}  \\
&+ \frac{{{\lambda _{md}}( {\int_0^{{t_{out}}} { {\int_0^{\frac{{{U_m}{N_m}}}{t}} {\;t{F_1}( \tau  ) \cdot {f_2}( {\frac{{{U_m}{N_m}}}{t} - \tau })} d\tau } dt} } )}}{{1 - {\lambda _{md}}( {\int_0^{{t_{out}}} { {\int_0^{\frac{{{U_m}{N_m}}}{t}} {\;{F_1}( \tau ) \cdot {f_2}( {\frac{{{U_m}{N_m}}}{t} - \tau })} d\tau } dt} } )}}.
\end{aligned}
\end{equation}
\end{footnotesize}
\indent Delay jitter of MTC network is
\begin{small}
\begin{equation}\label{exp35}
\begin{aligned}
J_m'=& {\rm{2}}\int_0^{{t_{out}}} { {\int_0^{\frac{{{U_m}{N_m}}}{t}} {\;t{F_1}( \tau ) \cdot {f_2}( {\frac{{{U_m}{N_m}}}{t} - \tau } )} d\tau } dt}\\
 &- {( {\int_0^{{t_{out}}} { {\int_0^{\frac{{{U_m}{N_m}}}{t}} {\;{F_1}( \tau ) \cdot {f_2}( {\frac{{{U_m}{N_m}}}{t} - \tau } )} d\tau } dt} } )^2}\\
 &+ {( {\frac{{{\lambda _{md}}( {\int_0^{{t_{out}}} { {\int_0^{\frac{{{U_m}{N_m}}}{t}} {\;t{F_1}( \tau  ) \cdot {f_2}( {\frac{{{U_m}{N_m}}}{t} - \tau } )} d\tau } dt} } )}}{{1 - {\lambda _{md}}( {\int_0^{{t_{out}}} { {\int_0^{\frac{{{U_m}{N_m}}}{t}} {\;{F_1}( \tau  ) \cdot {f_2}( {\frac{{{U_m}{N_m}}}{t} - \tau } )} d\tau } dt} } )}}} )^2} \\
 &+ \frac{{{\lambda _{md}}( {\int_0^{{t_{out}}} { {\int_0^{\frac{{{U_m}{N_m}}}{t}} {\;{t^2}{F_1}( \tau  ) \cdot {f_2}( {\frac{{{U_m}{N_m}}}{t} - \tau } )} d\tau } dt} } )}}{{1 - {\lambda _{md}}( {\int_0^{{t_{out}}} { {\int_0^{\frac{{{U_m}{N_m}}}{t}} {\;{F_1}( \tau  ) \cdot {f_2}( {\frac{{{U_m}{N_m}}}{t} - \tau } )} d\tau } dt} } )}},
\end{aligned}
\end{equation}
\end{small}
\indent where
\begin{equation}\label{exp36}
\begin{aligned}
{F_1}( \tau  ) =& 1 - \exp ( - y_0^\alpha ( {{2^{\frac{\tau }{{{B_h}}}}} - 1} )\frac{{N{B_h}}}{{{P_m'}{N_m}}} \\
&- {\lambda _h}\frac{{2{\pi ^2}y_0^2{{( {\frac{{{P_h}}}{{{P_m'}}}( {{2^{\frac{\tau }{{{B_h}}}}} - 1} )} )}^{\frac{2}{\alpha }}}}}{{\alpha \sin \left( {\frac{{2\pi }}{\alpha }} \right)}}),
\end{aligned}
\end{equation}
\begin{small}
\begin{equation}\label{exp37}
{f_2}( \tau  ) = {\rm{exp}}( { - y_0^\alpha ( {{2^{\frac{\tau }{{{B_m}}}}} - 1} )\frac{{N{B_m}}}{{{P_m}{N_m}}}} )( {\frac{{{2^{\frac{\tau }{{{B_m}}}}}N{{\log }_e}( 2 )y_0^\alpha }}{{{P_m}{N_m}}}}).
\end{equation}
\end {small}
\end{Theo}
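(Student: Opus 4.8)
The plan is to reduce Theorem~3 to the same M/G/1 analysis already used for Theorems~1 and~2, so that the only genuinely new work is identifying the service-time distribution of $T_{ms}'$ and feeding its first three moments into the Pollaczek--Khinchine formulas. From (\ref{exp33}), a packet destined for the typical MTC device is served at the aggregate rate $C_{ms}' = (R_1+R_2)/N_m$, where $R_1 := B_h\log_2(1+\gamma_{m2}')$ is the contribution of the shared band and $R_2 := B_m\log_2(1+\gamma_{m1}')$ that of the proprietary band, so $T_{ms}' = U_mN_m/(R_1+R_2)$. Since $\gamma_{m2}'$ depends only on the shared-band gain $k_0$ and the HBS field $\phi_h$, while $\gamma_{m1}'$ depends only on the proprietary-band gain $g_0$, and all fading is independent, $R_1$ and $R_2$ are independent random variables.

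First I would obtain the law of $R_1$. Writing $\{R_1\le\tau\} = \{\gamma_{m2}'\le 2^{\tau/B_h}-1\}$ and conditioning on the unit-mean exponential $k_0$, the complementary probability factors into a noise term $\exp(-y_0^\alpha(2^{\tau/B_h}-1)NB_h/(P_m'N_m))$ times the Laplace transform of $\sum_{i\in\phi_h}P_hx_i^{-\alpha}h_i$ evaluated at $y_0^\alpha(2^{\tau/B_h}-1)/P_m'$; the latter is exactly the PPP Laplace-functional computation already carried out for $L_{I_h}$ in the proof of Lemma~\ref{lemma1} --- now with the interference sum running over \emph{all} of $\phi_h$, since the MBS is not a point of $\phi_h$ --- which supplies the $2\pi^2y_0^2(\cdot)^{2/\alpha}/(\alpha\sin(2\pi/\alpha))$ term and yields $F_1(\tau)$ as in (\ref{exp36}). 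Next, $R_2$ sees no interference, so $\{R_2\le\tau\} = \{g_0\le(2^{\tau/B_m}-1)NB_my_0^\alpha/(P_mN_m)\}$ and the exponential CDF gives $P(R_2\le\tau) = 1-\exp(-y_0^\alpha(2^{\tau/B_m}-1)NB_m/(P_mN_m))$; differentiating in $\tau$ produces the density $f_2$ of (\ref{exp37}). Because $R_1$ and $R_2$ are independent and nonnegative, $P(R_1+R_2<s) = \int_0^s F_1(\tau)f_2(s-\tau)\,d\tau$, and taking $s = U_mN_m/t$ gives $1-F_{T_{ms}'}(t) = P(T_{ms}'>t) = \int_0^{U_mN_m/t}F_1(\tau)f_2(U_mN_m/t-\tau)\,d\tau$.

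The rest is bookkeeping. Since $t_{out}$ caps the service delay, the effective service time is $S = \min(T_{ms}',t_{out})$, so $E(S^k) = \int_0^{t_{out}}kt^{k-1}(1-F_{T_{ms}'}(t))\,dt$ for $k=1,2,3$; substituting the convolution integral turns each moment into the nested double integral $\int_0^{t_{out}}\int_0^{U_mN_m/t}t^{k-1}F_1(\tau)f_2(U_mN_m/t-\tau)\,d\tau\,dt$ appearing in (\ref{exp34})--(\ref{exp35}). Modelling the BS buffer as an M/G/1 queue with Poisson rate $\lambda_{md}$ and FCFS discipline, and using that a tagged packet's service time is independent of the i.i.d.\ service times of the packets ahead of it, the Pollaczek--Khinchine results give $E(T_m') = E(S)+\lambda_{md}E(S^2)/(2(1-\lambda_{md}E(S)))$ and $\mathrm{Var}(W) = (\lambda_{md}E(S^2)/(2(1-\lambda_{md}E(S))))^2+\lambda_{md}E(S^3)/(3(1-\lambda_{md}E(S)))$; since $W$ and $S$ are independent, $J_m' = \mathrm{Var}(T_m') = \mathrm{Var}(S)+\mathrm{Var}(W)$. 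Collecting terms yields (\ref{exp34}) and (\ref{exp35}).

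The main obstacle is the middle step: obtaining $F_1$ in closed form needs a careful PPP Laplace-functional evaluation, and then the two independent rate laws must be convolved correctly while keeping the substitution $s = U_mN_m/t$, the inner limit $U_mN_m/t-\tau$, and the outer limit $U_mN_m/t$ mutually consistent. Everything downstream --- the three moment integrals and the Pollaczek--Khinchine substitution --- is routine, but one should check that the $t_{out}$ truncation is propagated through all three moments and that the independence hypotheses (independent fading on the two bands, and independence of $W$ from the tagged packet's own service time) are invoked explicitly.
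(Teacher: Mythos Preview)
Your proposal is correct and follows essentially the same route as the paper's Appendix~C: decompose the aggregate rate into the independent pieces $R_1$ and $R_2$, obtain $F_1$ via the PPP Laplace-functional computation of Lemma~\ref{lemma1} and $f_2$ by differentiating the exponential CDF of $g_0$, convolve to get $1-F_{T_{ms}'}(t)=F_u(U_mN_m/t)$, then feed the $t_{out}$-truncated moments into the M/G/1 Pollaczek--Khinchine formulas (\ref{exp47}) and (\ref{exp52}). The only cosmetic difference is that you invoke the tail-integral identity $E(S^k)=\int_0^{t_{out}}kt^{k-1}(1-F_{T_{ms}'}(t))\,dt$ directly, whereas the paper reaches the same moment expressions by first conditioning on $\{T_{ms}'<t_{out}\}$ as in (\ref{exp40}) and then integrating by parts.
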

\end{spacing}

\begin{proof}
The proof of Theorem 3 is in Appendix C.
\end{proof}

\section{Analysis and Guiding Significance of Simulation Results}
\indent In this section, the performance of HTC network and MTC network is verified by simulation. For HTC network, the outage probability of communication with and without spectrum sharing are compared. For MTC network, the delay and jitter of communication with and without spectrum sharing are compared.
\subsection{Simulation Results and Analysis}
Monte Carlo simulation is applied to verify the correctness of the theoretical analysis.
Due to the randomness of the distribution of BSs and users in practical applications,
stochastic geometry is used to analyze the network performance by treating the BSs and users as random points.
First, according to the property of homogeneous PPP,
a specific number of points are scattered to simulate the deployment of HBSs,
UE and MTC devices.
Then, the distance between the typical UE and HBS as well as
the distance between the typical MTC device and MBS are calculated.
The channel gains are randomly generated according to the characteristics of the exponential distribution.
Finally, the outage probability of HTC network and the delay performance of MTC network are simulated, and the theoretical results are also obtained.
The simulation parameters are determined according to the 3GPP standard.
The values of the transmit power of HBS and MBS refer to the transmit power of the BS in \cite{3gpp.38.901}.
The simulation parameters,
such as the size of data packets and the distribution density of MTC devices, are selected based on the service performance requirements proposed by
3GPP TS 22.104 \cite{3gpp.22.104}.
The setting of simulation parameters is listed in Table 2.

\begin{table}[h]
 \caption{\label{sys_para}The Setting of the Simulation Parameters}
 \begin{center}
 \begin{tabular}{l l}
 \hline
 \hline

    {Notation} & {Value} \\

  \hline
  $ x_0 $ & $10$ m \\
  $ y_0 $ & $10$ m \\
  $ B_h $ & $20$ MHz\\
  $ B_m $ & $100$ MHz \\
  $ N $ & ${10^{ - 10}}$ {\rm{W}}/Hz\\
  $ \alpha $ & 4 \\
  $ U_m $ & $40$ byte \\
  $ t_{out} $ & $0.01$ s\\
  $ N_h $ & 1000 \\
  $ \theta_h $ &  0.01\\
  $ P_{\max } $ & $24$ dBm\\
  \hline
  \hline
 \end{tabular}
 \end{center}
\end{table}

\begin{figure}
\centering
\includegraphics[width=0.49\textwidth]{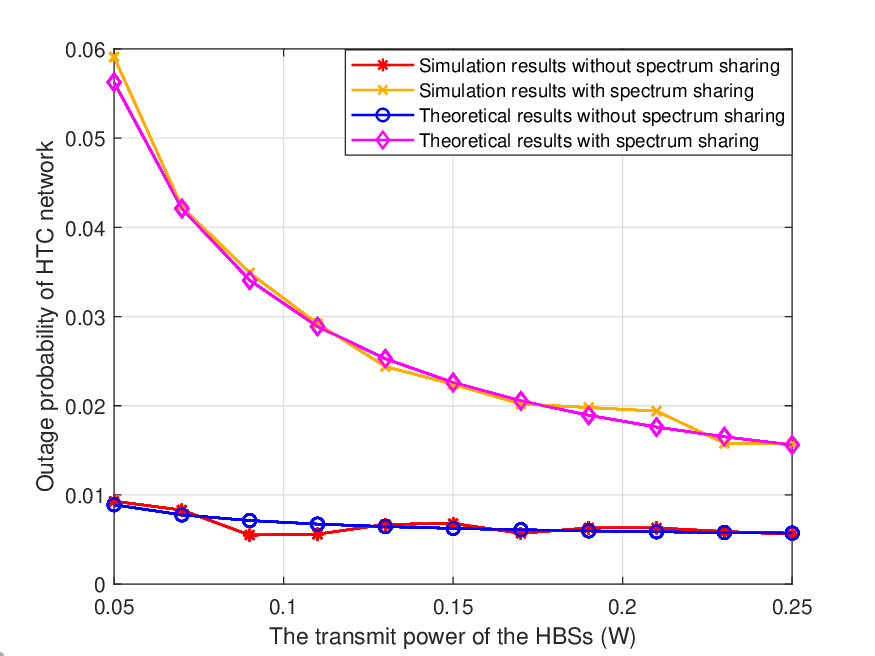}
\caption{The relationship between the outage probability and the transmit power of HBSs.}
\label{fig_Outage1}
\end{figure}
\indent Fig. \ref{fig_Outage1} shows the impact of the transmit power of HBSs on the outage probability of communication. It is revealed that the outage probability is hardly changed without spectrum sharing. The increased strength of signal and interference is the same, which results in the stability of the outage probability. With spectrum sharing, the outage probability of HTC network gradually decreases. This is due to the fact that the interference from MBS to HTC network is reduced with the increase of the transmit power of HBSs.

\begin{figure}
\centering
\includegraphics[width=0.49\textwidth]{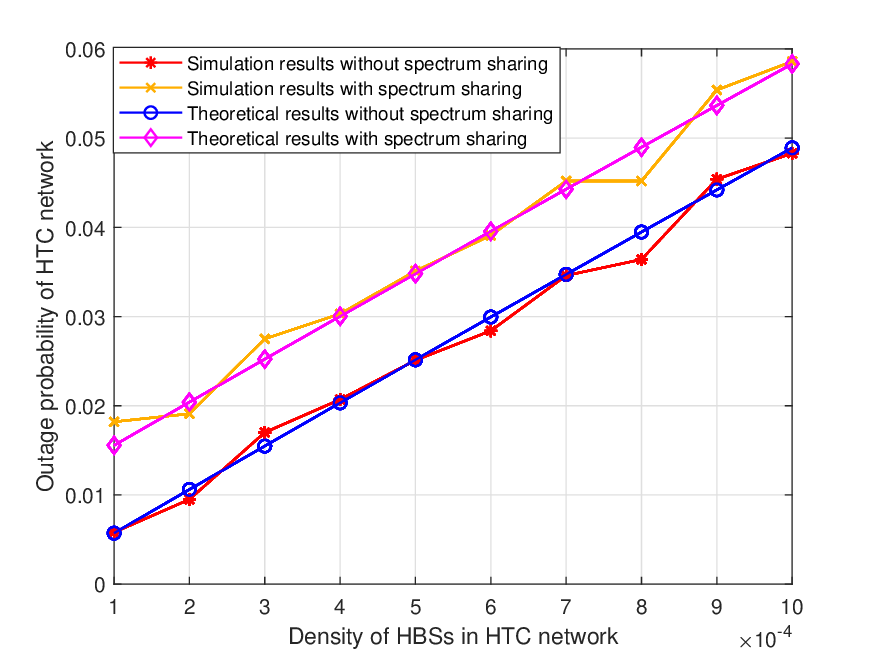}
\caption{The relationship between the outage probability and the distribution density of HBSs.}
\label{fig_Outage2}
\end{figure}
\indent Fig. \ref{fig_Outage2} shows the influence of the distribution density of HBSs on the outage probability of communication when the transmit power of HBSs is 24 dBm. As illustrated in Fig. \ref{fig_Outage2}, the outage probability will increase with the increase of the distribution density of HBSs regardless of whether spectrum sharing is adopted. This is due to the fact that the number of HBSs in the same area increases with the increase of the distribution density of HBSs, resulting in an increase in the strength of the interference received by the typical UE and a decrease in the SINR of the typical UE. Therefore, when deploying HBSs in HTC network, the distribution density needs to be reasonably designed to ensure the communication performance of the UEs.

\begin{figure}
\centering
\includegraphics[width=0.49\textwidth]{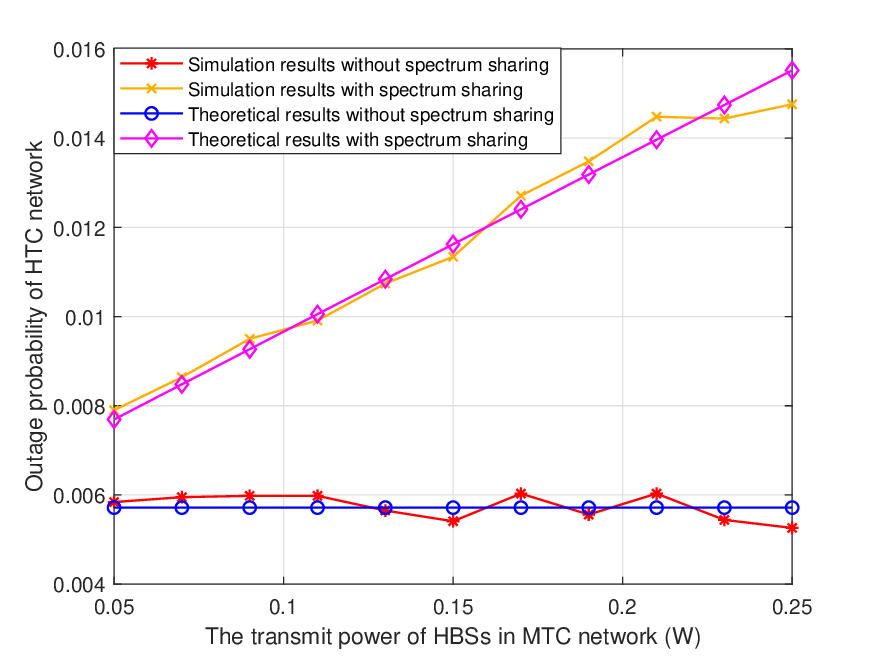}
\caption{The relationship between the outage probability and the transmit power of MBS in MTC network.}
\label{fig_Outage3}
\end{figure}
\indent The influence of the transmit power of MBS on the outage probability of HTC network with and without spectrum sharing is shown in Fig. \ref{fig_Outage3}. MTC network has only one MBS. In the simulation, the channel gain of MBS is generated randomly. As illustrated in Fig. \ref{fig_Outage3}, without spectrum sharing, the communication between the typical UE and its associated HBS is not interfered with MTC network since the two networks are independent. Hence, the outage probability of HTC network is basically stable. With spectrum sharing, the outage probability of HTC network increases with the increase of the transmit power of MBS, which is mainly caused by the increased interference of MBS to HTC network. Therefore, as long as the transmit power of  MBS is reasonably controlled, the communication of HTC network will not be affected severely. The HTC network and the MTC network share the spectrum, thereby improving the spectrum utilization and the capacity of MTC network simultaneously.

\begin{figure}
\centering
\includegraphics[width=0.49\textwidth]{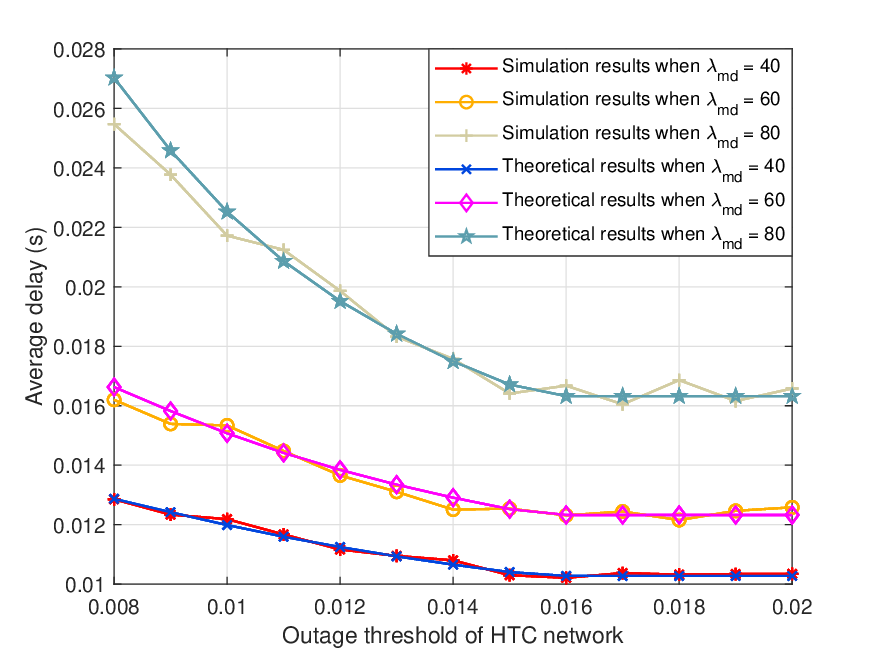}
\caption{The relationship between average delay and outage threshold of HTC network when MTC network has no proprietary spectrum.}
\label{fig_delay_no}
\end{figure}
\indent Fig. \ref{fig_delay_no} shows the delay performance when sharing spectrum with HTC network
in the scenario that MTC network has no proprietary spectrum. According to the comparison of the curves
in Fig. \ref{fig_delay_no}, it is revealed that with the increase of the packet arrival rate, the average delay of a data packet gradually increases.
The reason for this trend is that the number of data packets
arriving at the MBS increases per unit time,
and more data packets need to wait for a long time to be served.
In addition, Fig. \ref{fig_delay_no} shows that with the increase of the outage threshold, the average delay experienced by a data packet first decreases,
and then tends to be stable.
This is due to the increase in outage threshold, which increases the maximum allowable transmit power of MBS,
thereby improving the service capability of
MBS and reducing the average delay. However, when the transmit power of MBS reaches the maximum
allowable transmit power, it will maintain this
power to transmit data packets.

\begin{figure}
\centering
\includegraphics[width=0.49\textwidth]{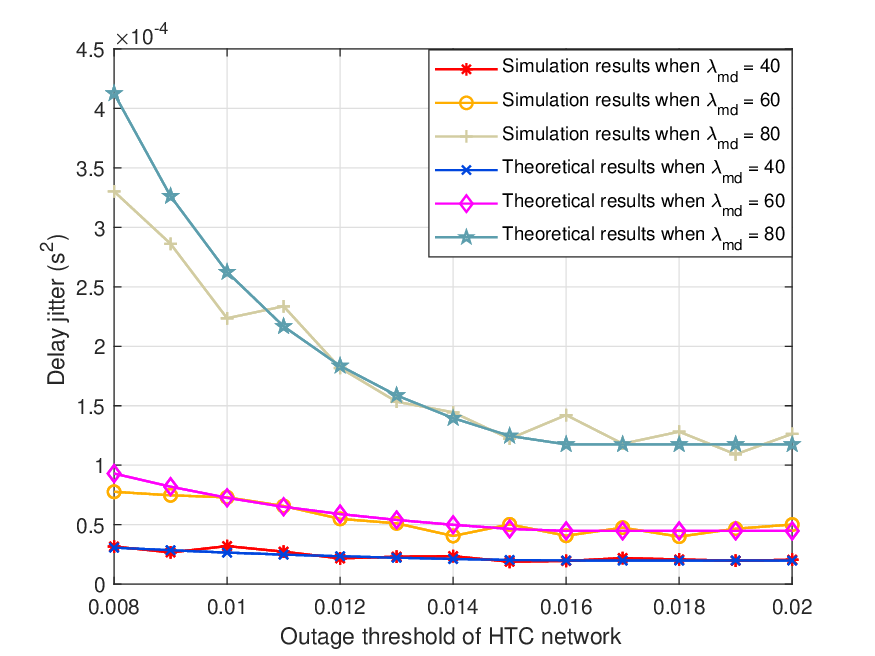}
\caption{The relationship between delay jitter and outage threshold of HTC network when MTC network has no proprietary spectrum.}
\label{fig_jitter_no}
\end{figure}
\indent Fig. \ref{fig_jitter_no} shows the relationship between
the delay jitter and the communication outage threshold.
It is revealed that the change trend of delay jitter is consistent
with that of average delay.
Fig. \ref{fig_jitter_no} shows that with the increase of
the arrival rate of data packets,
the delay jitter gradually increases.
The reason for this trend is that as the arrival rate of data packets
increases, the uncertainty of waiting delay before being served increases,
thus leading to increased jitter. It is revealed from
Fig \ref{fig_jitter_no} that the delay
jitter decreases with the increase of the outage threshold and finally tends to be stable.
The reason is that the increase of the outage threshold increases
the maximum allowable transmit power of MBS, so that the service delay decreases,
thus reducing the variance of the service delay.
The reason of the stabilized trend of delay jitter is that the transmit
power of MBS reaches the maximum allowable transmit power.

\begin{figure}
\centering
\includegraphics[width=0.49\textwidth]{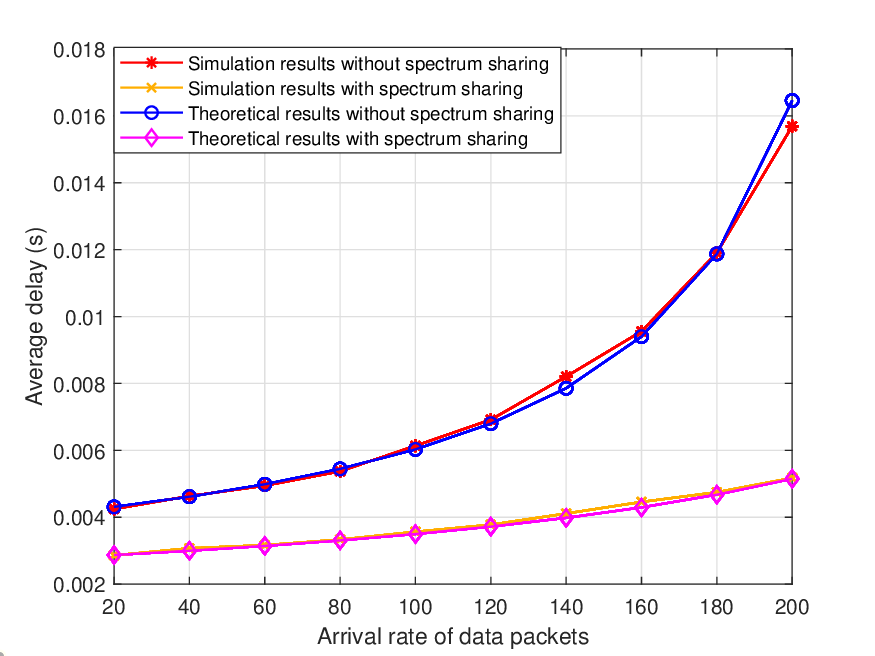}
\caption{The influence of packet arrival rate on average delay with and without spectrum sharing.}
\label{fig_Average1}
\end{figure}
\indent
The change of the average delay of a data packet with and without spectrum sharing is shown in Fig. \ref{fig_Average1}. As the arrival rate of data packets increases, the average delay of data packets increases gradually. The reason is that the increase in the number of data packets arriving in the queue per unit time leads to an increase in the waiting delay of packets. It can also be observed from the comparison that the average delay of a data packet is reduced with spectrum sharing. This is due to the fact that spectrum sharing improves the communication capacity of MTC network, so that the waiting delay of a data packet is reduced, thereby reducing the average delay.
\begin{figure}
\centering
\includegraphics[width=0.49\textwidth]{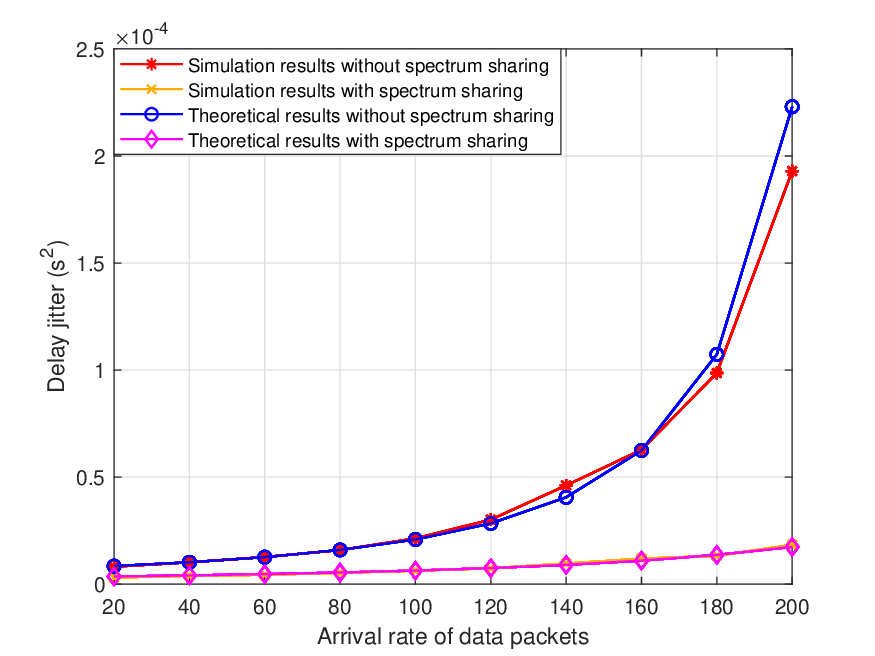}
\caption{The influence of packet arrival rate on delay jitter with and without spectrum sharing.}
\label{fig_Jitter1}
\end{figure}

\indent Fig. \ref{fig_Jitter1} depicts the variation of
delay jitter when the arrival rate of data packets increases gradually.
It can be observed from the curves that the spectrum sharing
strategy can significantly reduce the delay jitter.
Besides, as the arrival rate of data packets increases,
the delay jitter of data transmission increases continuously.
The main reason is that the number of data packets arriving in the
queue per unit time increases as the arrival rate of data packets increases,
and the probability that data packets need to wait increases.
As a result, the variance of the waiting delay is increasing.
Therefore, in practical scenarios, the arrival rate of data packets
should be reasonably designed to avoid queue overflow.
\begin{figure}
\centering
\includegraphics[width=0.49\textwidth]{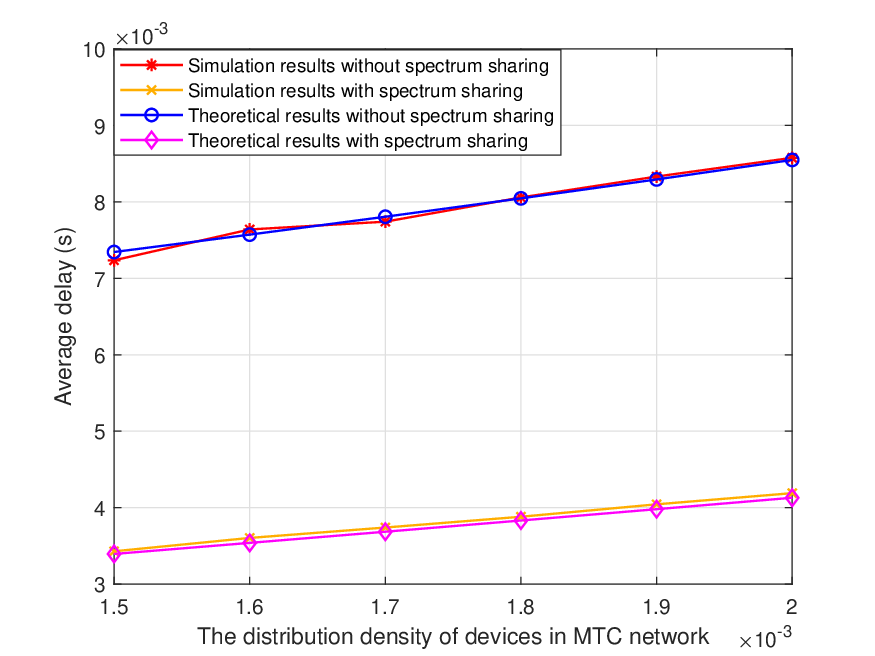}
\caption{The relationship between average delay and distribution density of MTC devices.}
\label{fig_Average2}
\end{figure}

\indent Fig. \ref{fig_Average2} shows the relationship between the average delay and the distribution density of MTC devices. Due to the randomness of the channel gain generated by MBS, the number of iterations is increased in Fig. \ref{fig_Average2} to reduce the gap between theoretical results and simulation results. It is observed that the average delay decreases significantly with spectrum sharing. However, as the distribution density of MTC devices increases, the average delay of a data packet increases gradually. This is due to the fact that the increase in the distribution density of MTC devices leads to the reduction of spectrum available to MTC devices, so that the service delay and waiting delay of a data packet will increase.
\begin{figure}
\centering
\includegraphics[width=0.49\textwidth]{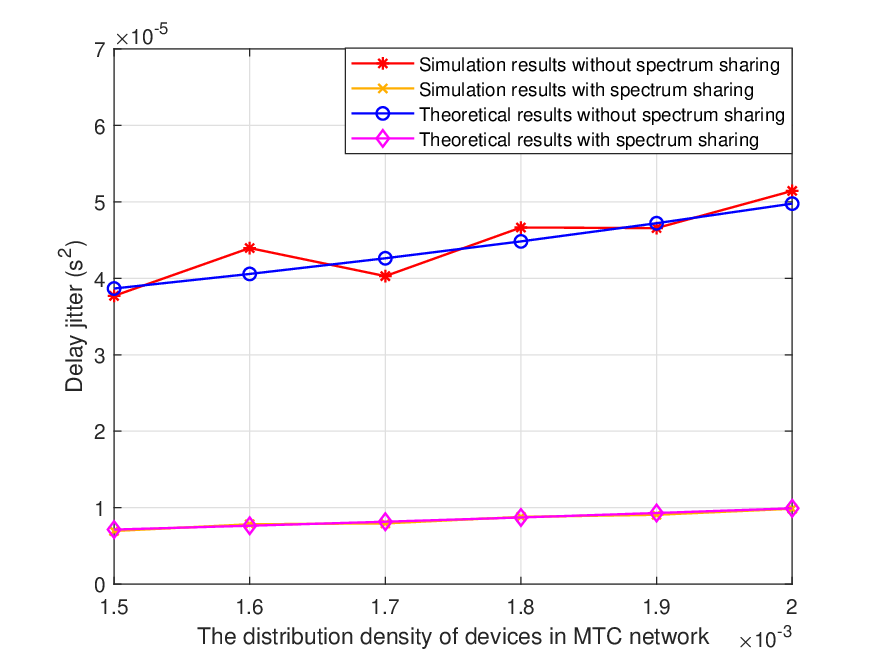}
\caption{The relationship between delay jitter and distribution density of MTC devices.}
\label{fig_Jitter2}
\end{figure}

\indent Fig. \ref{fig_Jitter2} depicts the variation of delay jitter when the distribution density of MTC devices increases. It is observed that the delay jitter is increasing with the increase of the distribution density of MTC devices, whether spectrum sharing is adopted or not. The reason is that as the density of MTC devices increases, the available spectrum obtained by each MTC device decreases, which results in an increase in service delay. The waiting delay and its fluctuation are increasing, which leads to an increase in the delay jitter. Therefore, it is necessary to reasonably deploy the MTC devices to ensure the low delay jitter of MTC network.

\begin{figure}
\centering
\includegraphics[width=0.49\textwidth]{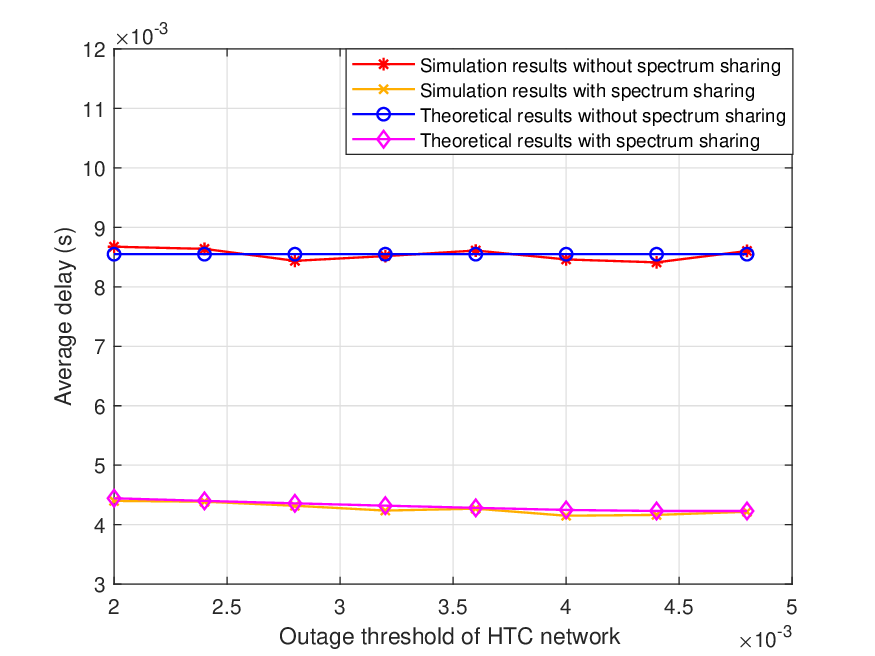}
\caption{The relationship between average delay and outage threshold of HTC network when MTC network has proprietary spectrum.}
\label{fig_Average3}
\end{figure}
\indent The impact of the outage threshold of HTC network on the average delay of MTC network is depicted in Fig. \ref{fig_Average3}. It can be observed that with the increase of the outage threshold, the average delay of MTC network gradually decreases. The reason is that the increase of the outage threshold of HTC network increases the transmit power of MBS with spectrum sharing, thus reducing the service delay of data packet. However, with the increase of the outage threshold, the MBS reaches the maximum allowable transmit power and the average delay of MTC network reaches a stable state.
\begin{figure}
\centering
\includegraphics[width=0.49\textwidth]{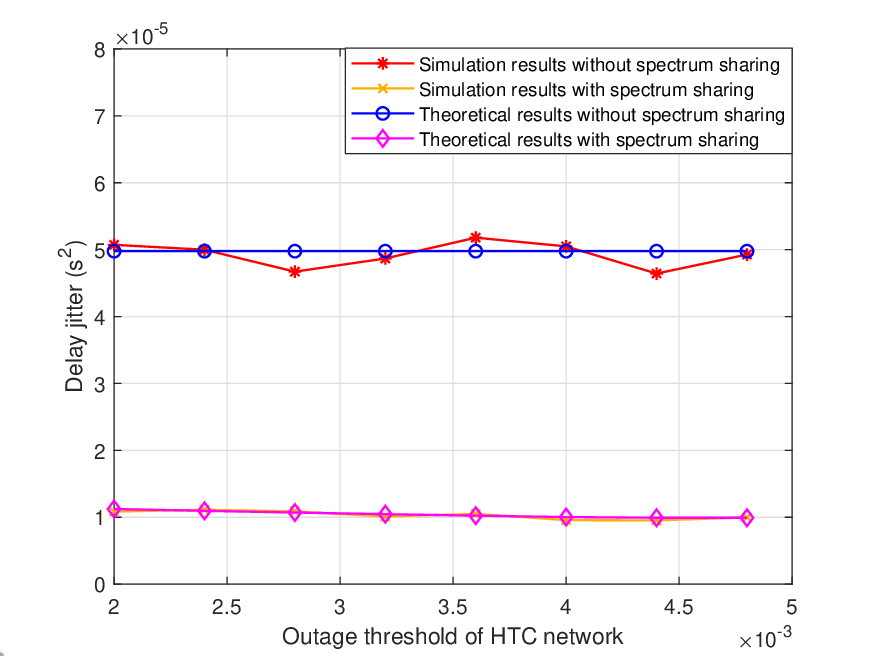}
\caption{The relationship between  delay jitter and outage threshold of HTC network when MTC network has proprietary spectrum.}
\label{fig_Jitter3}
\end{figure}

\indent Fig. \ref{fig_Jitter3} depicts the variation of delay jitter. It can be observed that with the increase of the outage threshold, the delay jitter gradually decreases until it becomes stable. The reason for the downward trend of delay jitter is that with the increase of HTC network's tolerance to communication outage, the transmit power of MBS with spectrum sharing increases. And the service delay and waiting delay of the data packets are reduced, therefore reducing the delay jitter. The delay jitter finally remains stable because the transmit power of MBS has reached the set maximum value.

\subsection{Design Guideline in Industrial Applications}
\indent The simulation results show that MTC network can coexist with HTC network in the same frequency band when it has proprietary spectrum. In practical industrial applications, it is necessary to reasonably control the transmit power of MBS to keep its interference to HTC network within a certain range. Meanwhile, other technologies, such as beamforming, can also be applied to reduce the interference to HTC network. In addition, the density of MBS devices and the arrival rate of data packets will also affect the delay. If the arrival rate of data packets is too large, the limited service capacity will lead to an increase in the waiting delay and even result in network collapse. Therefore, for MTC network, it can also effectively reduce the communication delay and jitter by controlling the density of MBS devices and the arrival rate of data packets, thus improving the performance of MTC network.

\section{Conclusion}
In this paper, the performance of HTC network and MTC network is analyzed.
By leveraging mathematical tools such as stochastic geometry and
queueing theory, this paper obtains expressions for the outage
probability of HTC network, the average delay and jitter of MTC network
with and without spectrum sharing.
The simulation results show that spectrum sharing can reduce the
delay and jitter of MTC network on the premise of controlling the
transmit power of MBS to reduce the interference to HTC network.
Besides, this paper also analyzes the impact of transmit power,
node density and other factors on the
delay performance of MTC network. Simulation results show that
with the continuous increase of the arrival rate of data packets and
the distribution density of MTC devices,
the average delay and jitter of data transmission increase
accordingly. Therefore, it is necessary to
reasonably control the arrival rate of data packets and the
distribution density of MTC devices to
ensure low delay and low jitter in MTC network.
Spectrum sharing has been applied in 4G and 5G,
evolving into LTE-Unlicensed (LTE-U) and 5G New Radio in Unlicensed
Spectrum (5G NR-U) respectively,
so that spectrum sharing is promising to realize the
delay deterministic wireless network.

\bibliographystyle{IEEEtran}
\bibliography{reference}

\begin{appendices}
    \renewcommand{\thesection}{\Alph{section}}
    \section{Proof of Theorem 1}

According to (\ref{exp17}), the Cumulative Distribution Function (CDF) of the service delay can be expressed as (\ref{exp38}).
\begin{figure*}

\begin{equation}\label{exp38}
\begin{aligned}
{F_{{T_{mso}}}}( t )&{\rm{ =  }}P( {\frac{{{U_m}{N_m}}}{{{B_h}{{\log }_2}( {1{\rm{ + }}\frac{{P_m'y_0^{ - \alpha }{k_0}}}{{\sum\limits_{i \in {\phi _h}} {{P_h}x_i^{ - \alpha }{h_i}} {\rm{ + }}N\frac{{{B_h}}}{{{N_m}}}}}} )}} < t} )\\
\;\;\;\; &= \exp ( { - y_0^\alpha ( {{2^{\frac{{{U_m}{N_m}}}{{{B_h}t}}}} - 1} )\frac{{N{B_h}}}{{{P_m'}{N_m}}} - {\lambda _h}\frac{{2{\pi ^2}y_0^2{{( {\frac{{{P_h}}}{{P_m'}}( {{2^{\frac{{{U_m}{N_m}}}{{{B_h}t}}}} - 1} )} )}^{\frac{2}{\alpha }}}}}{{\alpha \sin ( {\frac{{2\pi }}{\alpha }} )}}}).
\end{aligned}
\end{equation}
{\noindent} \rule[-10pt]{18cm}{0.01em}
\end{figure*}
The service delay of a data packet is finite. If the service delay is greater than a threshold $t_{out}$, it is regarded as a transmission failure. Hence, the probability of packet transmission failure is expressed as (\ref{exp39}).
\begin{figure*}
\begin{equation}\label{exp39}
{P_{mo\_out}}{\rm{ = 1 - }}\exp ( { - y_0^\alpha ( {{2^{\frac{{{U_m}{N_m}}}{{{B_m}{t_{out}}}}}} - 1} )\frac{{N{B_h}}}{{P_m'{N_m}}} - {\lambda _h}\frac{{2{\pi ^2}y_0^2{{( {\frac{{{P_h}}}{{P_m'}}( {{2^{\frac{{{U_m}{N_m}}}{{{B_h}{t_{out}}}}}} - 1} )})}^{\frac{2}{\alpha }}}}}{{\alpha \sin ( {\frac{{2\pi }}{\alpha }})}}}).
\end{equation}
{\noindent} \rule[-10pt]{18cm}{0.01em}
\end{figure*}
The moments of service delay can be expressed as \cite{7835142,6478134,6562787,tran2012delay}
\begin{equation}\label{exp40}
\begin{aligned}
E( {T_{mso}^i} ) = &E[ {T_{mso}^i|{T_{mso}} < {t_{out}}} ]P( {{T_{mso}} < {t_{out}}} )\\
&+ E[ {T_{mso}^i|{T_{mso}} \ge {t_{out}}} ]P( {{T_{mso}} \ge {t_{out}}} )\\
= &E[ {T_{mso}^i|{T_{mso}} < {t_{out}}} ]{F_{{T_{mso}}}}t( {{t_{out}}} ) \\
&+ t_{out}^i( {1 - {F_{{T_{mso}}}}( {{t_{out}}} )} ),
\end{aligned}
\end{equation}
where $i = 1$ is the first moment of service delay, and $i = 2$ is the second moment of service delay.

\indent When the service delay of the data packet is smaller than a threshold, namely ${T_{mso}} < {t_{out}}$, the data packet of MTC network is successfully transmitted to the typical MTC device. The service delay of the data packet when ${T_{mso}} < {t_{out}}$ is represented by ${\overline T _{ms}}$, and its CDF is
\begin{equation}\label{exp41}
\begin{aligned}
P( {{{\overline T }_{mso}}|{T_{mso}} < {t_{out}}} ) &= \frac{{P( {{{\overline T }_{mso}},{T_{mso}} < {t_{out}}} )}}{{P( {{T_{mso}} < {t_{out}}} )}} \\
&= \frac{{P( {{{\overline T }_{mso}},{T_{mso}} < {t_{out}}} )}}{{1 - {P_{mo-out}}}}.
\end{aligned}
\end{equation}
\indent The Probability Density Function (PDF) of ${\overline T _{mso}}$ is \cite{7835142,6478134,6562787,tran2012delay}
\begin{equation}\label{exp42}
{f_{{{\overline T }_{mso}}}}( t ) = \frac{{{f_{{T_{mso}}}}( t )}}{{1 - {P_{mo-out}}}}, {\rm \rm 0} \le t \le {t_{out}}.
\end{equation}
\indent When the packet transmission is successful, the expectation of service delay can be expressed as
\begin{equation}\label{exp43}
\begin{aligned}
&E[ {{T_{mso}}|{T_{mso}} < {t_{out}}} ]\\
&{\rm{ = }}\int_0^{{t_{out}}} {t{f_{\mathop {{T_{mso}}}\limits^ -  }}( t )} dt\\
&{\rm{ = }}\frac{1}{{1 - {P_{mo-out}}}}( {{t_{out}}{F_{{T_{mso}}}}( {{t_{out}}} ) - \int_0^{{t_{out}}} {{F_{{T_{mso}}}}( t )dt} } ).
\end{aligned}
\end{equation}
\indent Similar to the derivation of (\ref{exp43}), the second moment of the service delay when the packet is successfully transmitted is
\begin{equation}\label{exp44}
\begin{aligned}
&E[ {T_{mso}^2|{T_{mso}} < {t_{out}}} ]\\
&{\rm{ = }}\int_0^{{t_{out}}} {{t^2}{f_{_{{{\overline T }_{mso}}}}}( t )} dt\\
&{\rm{ = }}\frac{1}{{1 - {P_{mo-out}}}}( {t_{out}^2{F_{{T_{mso}}}}( {{t_{out}}} ) - 2\int_0^{{t_{out}}} {t{F_{{T_{mso}}}}( t )dt} } ).
\end{aligned}
\end{equation}
\indent Substituting (\ref{exp43}) into (\ref{exp40}), the first moment of service delay is
\begin{equation}\label{exp45}
E( {{T_{mso}}} ) = {t_{out}} - \int_0^{{t_{out}}} {{F_{{T_{mso}}}}( t)} dt.
\end{equation}
\indent Similar to the derivation of (\ref{exp45}), the second moment of the service delay is
\begin{equation}\label{exp46}
E( {T_{mso}^2} ) = t_{out}^2 - 2\int_0^{{t_{out}}} {t{F_{{T_{mso}}}}( t )} dt.
\end{equation}
\indent In the M/G/1 queueing model, the waiting delay of a data packet is \cite{ross2014introduction}
\begin{equation}\label{exp47}
E( {{T_{mwo}}} ) = \frac{{{\lambda _{md}}E( {T_{mso}^2} )}}{{2(1 - {\rho _m})}},
\end{equation}
where  ${\rho _m}$ is the load intensity, which is expressed as
\begin{equation}\label{exp48}
{\rho _m}  = {\lambda _{md}}E({{T_{mso}}}).
\end{equation}
\indent According to (\ref{exp45}), (\ref{exp47}) and (\ref{exp48}), the average waiting delay is
\begin{equation}\label{exp49}
E( {{T_{mwo}}} ) = \frac{{{\lambda _{md}}( {t_{out}^2 - 2\int_0^{{t_{out}}} {t{F_{{T_{mso}}}}( t )} dt} )}}{{2( {1 - {\lambda _{md}}( {{t_{out}} - \int_0^{{t_{out}}} {{F_{{T_{mso}}}}( t )} dt} )} )}}.
\end{equation}
\indent Delay jitter is defined as the variation of the delay \cite{rayes2018internet}. When MTC network uses proprietary spectrum, the expression of delay jitter is
\begin{equation}\label{exp50}
{J_{mo}} = D( {{T_{mso}}} ) + D( {{T_{mwo}}}).
\end{equation}
\indent According to (\ref{exp45}) and (\ref{exp46}), the variance of service delay can be expressed as
\begin{equation}\label{exp51}
\begin{aligned}
D( {{T_{mso}}}) =& t_{out}^2 - 2\int_0^{{t_{out}}} {t{F_{{T_{mso}}}}( t )} dt\\
&- {( {{t_{out}} - \int_0^{{t_{out}}} {{F_{{T_{mso}}}}( t )} dt})^2}.
\end{aligned}
\end{equation}
\indent The variance of waiting delay is expressed as \cite{ross2014introduction}
\begin{equation}\label{exp52}
\begin{aligned}
D( {{T_{mwo}}} ){\rm{ = }}&{[ {E( {{T_{mwo}}} )} ]^2}+ \frac{{\lambda E( {T_{mso}^3} )}}{{3( {1 - \rho_m } )}}.
\end{aligned}
\end{equation}
\indent According to (\ref{exp40}), ${E( {T_{mso}^3} )}$ is
\begin{equation}\label{exp53}
\begin{array}{c}
E( {T_{mso}^3} ) = t_{{t_{out}}}^3 - 3\int_0^{{t_{out}}} {{t^2}{F_{{T_{mso}}}}( t)} dt.
\end{array}
\end{equation}
\indent Therefore, (\ref{exp52}) is rewritten as
\begin{equation}\label{exp54}
\begin{aligned}
D( {{T_{mwo}}} ) &= {( {\frac{{{\lambda _{md}}( {t_{out}^2 - 2\int_0^{{t_{out}}} {t{F_{{T_{mso}}}}( t)} dt} )}}{{2( {1 - {\lambda _{md}}( {{t_{out}} - \int_0^{{t_{out}}} {{F_{{T_{mso}}}}( t )} dt} )} )}}} )^2} \\
&+ \frac{{{\lambda _{md}}( {t_{{t_{out}}}^3 - 3\int_0^{{t_{out}}} {{t^2}{F_{{T_{mso}}}}( t )} dt} )}}{{2( {1 - {\lambda _{md}}( {{t_{out}} - \int_0^{{t_{out}}} {{F_{{T_{mso}}}}( t)} dt} )} )}}.
\end{aligned}
\end{equation}
\indent According to (\ref{exp51}) and (\ref{exp54}), the delay jitter in Theorem 1 is derived.

\section{Proof of Theorem 2}

According to (\ref{exp25}), the CDF of the service delay is
\begin{equation}\label{exp55}
\begin{aligned}
{F_{{T_{ms}}}}( t )&{\rm{ =  }}P( {\frac{{{U_m}{N_m}}}{{{B_m}{{\log }_2}( {1 + \gamma _m} )}} < t} )\\
& = P( {\gamma _m > {2^{\frac{{{U_m}{N_m}}}{{{B_m}t}}}} - 1} )\\
& = \exp ( { - y_0^\alpha ( {{2^{\frac{{{U_m}{N_m}}}{{{B_m}t}}}} - 1} )\frac{{N{B_m}}}{{{P_m}{N_m}}}}).
\end{aligned}
\end{equation}
\indent Hence, the probability of packet transmission failure is\\
\begin{equation}\label{exp56}
{P_{m-out}} = 1 - {\rm{exp}}( { - y_0^\alpha ( {{2^{\frac{{{U_m}{N_m}}}{{{B_m}{t_{out}}}}}} - 1} )\frac{{N{B_m}}}{{{P_m}{N_m}}}} ).
\end{equation}
\indent Since it is similar to the proof of theorem 1, the proof of theorem 2 will not be described in detail.

\section{Proof of Theorem 3}

According to (\ref{exp33}), the CDF of the service delay of the data packet is expressed as (\ref{exp57}),
where ${F_u}\left( z \right)$ can be expressed as (\ref{exp58}). Letting
\begin{figure*}
\begin{equation}\label{exp57}
\begin{array}{l}
F_{{T_{ms}}}'( t ){\rm{ =  }}P( {\frac{{{U_m}{N_m}}}{{{B_h}{{\log }_2}( {1{\rm{ + }}\frac{{{P_m'}y_0^{ - \alpha }{k_0}}}{{\sum\limits_{i \in {\phi _h}} {{P_h}x_i^{ - \alpha }{h_i}}  + \frac{{N{B_h}}}{{{N_m}}}}}} ) + {B_m}{{\log }_2}( {1{\rm{ + }}\frac{{{P_m}{N_m}y_0^{ - \alpha }{g_0}}}{{N{B_m}}}})}} < t} )\\
\;\;\;\;\;\;\;\;\;\;\;\;{\rm{ = }}1 - P( {{B_h}{{\log }_2}( {1{\rm{ + }}\frac{{{P_m'}y_0^{ - \alpha }{k_0}}}{{\sum\limits_{i \in {\phi _h}} {{P_h}x_i^{ - \alpha }{h_i}}  + \frac{{N{B_h}}}{{{N_m}}}}}}) + {B_m}{{\log }_2}( {1{\rm{ + }}\frac{{{P_m}{N_m}y_0^{ - \alpha }{g_0}}}{{N{B_m}}}}) \le \frac{{{U_m}{N_m}}}{t}} )\\
\;\;\;\;\;\;\;\;\;\;\;\; = 1 - {F_u}( z ){|_{z = \frac{{{U_m}{N_m}}}{t}}},
\end{array}
\end{equation}
{\noindent} \rule[-10pt]{18cm}{0.01em}
\end{figure*}
\begin{figure*}
\begin{equation}\label{exp58}
{F_u}(z) = P( {{B_h}{{\log }_2}( {1{\rm{ + }}\frac{{{P_m'}y_0^{ - \alpha }{k_0}}}{{\sum\limits_{i \in {\phi _h}} {{P_h}x_i^{ - \alpha }{h_i}}  + \frac{{N{B_h}}}{{{N_m}}}}}}) + {B_m}{{\log }_2}( {1{\rm{ + }}\frac{{{P_m}{N_m}y_0^{ - \alpha }{g_0}}}{{N{B_m}}}}) \le {z}} ).
\end{equation}
\end{figure*}
\begin{small}
\begin{equation}\label{exp59}
{F_1}( \tau ) = P( {{B_h}{{\log }_2}( {1{\rm{ + }}\frac{{{P_m}'y_0^{ - \alpha }{k_0}}}{{\sum\limits_{i \in {\phi _h}} {{P_h}x_i^{ - \alpha }{h_i}} {\rm{ + }}\frac{{N{B_h}}}{{{N_m}}}}}}) \le \tau }),
\end{equation}
\end{small}
similar to the derivation of Lemma 1, (\ref{exp60}) is derived. Letting
\begin{figure*}
{\noindent} \rule[-10pt]{18cm}{0.01em}
\begin{equation}\label{exp60}
{F_1}( \tau ) = 1 - \exp ( { - y_0^\alpha( {{2^{\frac{\tau }{{{B_h}}}}} - 1} )\frac{{N{B_h}}}{{{P_m'}{N_m}}} - {\lambda _h}\frac{{2{\pi ^2}y_0^2{{( {\frac{{{P_h}}}{{P_m'}}( {{2^{\frac{\tau }{{{B_h}}}}} - 1} )} )}^{\frac{2}{\alpha }}}}}{{\alpha \sin ( {\frac{{2\pi }}{\alpha }})}}}),
\end{equation}
{\noindent} \rule[-10pt]{18cm}{0.01em}
\end{figure*}
\begin{small}
\begin{equation}\label{exp61}
{F_2}( \tau) = P( {{B_m}{{\log }_2}( {1{\rm{ + }}\frac{{{P_m}{N_m}y_0^{ - \alpha }{{\rm{g}}_0}}}{{N{B_m}}}} ) < \tau} ),
\end{equation}
\end{small}
then
\begin{equation}\label{exp62}
{F_2}( \tau ){\rm{ = }}1 - \exp ( { - y_0^\alpha ( {{2^{\frac{\tau}{{{B_m}}}}} - 1} )\frac{{N{B_m}}}{{{P_m}{N_m}}}} ).
\end{equation}
\indent The derivative of (\ref{exp62}) with respect to $\tau$ is
\begin{small}
\begin{equation}\label{exp63}
{f_2}(\tau ) = {\rm{exp}}( { - y_0^\alpha ( {{2^{\frac{\tau}{{{B_m}}}}} - 1} )\frac{{N{B_m}}}{{{P_m}{N_m}}}} )( {\frac{{{2^{\frac{\tau}{{{B_m}}}}}N{{\log }_e}( 2 )y_0^\alpha }}{{{P_m}{N_m}}}} ).
\end{equation}
\end{small}
\indent Since ${F_1}(\tau )$ and ${F_2}(\tau)$ are independent, the PDF of ${F_u}( z )$ is
\begin{equation}\label{exp64}
\begin{aligned}
{f_u}\left( z \right) = {f_1}\left( z \right)*{f_2}\left( z \right) = \int_0^t {{f_1}\left( \tau  \right)}  \cdot {f_2}\left( {z - \tau } \right)d\tau,
\end{aligned}
\end{equation}
where $*$ represents performing convolution operation. Using the convolution integral property
\begin{equation}\label{exp65}
\begin{array}{l}
\int_{ - \infty }^z {{f_1}(z) * {f_2}(z} )dz\\
 = (\int_{ - \infty }^z {{f_1}(z)} dz) * {f_2}(z)\\
 = (\int_{ - \infty }^z {{f_2}(z)} dz) * {f_1}(z),
\end{array}
\end{equation}
${F_u}( z )$ can be expressed as
\begin{equation}\label{exp66}
\begin{aligned}
{F_u}( z ) = &\int_{ - \infty }^z {f( z )} dz\\
{\rm{ = }}&\int_0^z {\; {\int_0^t {{f_1}( \tau ) \cdot {f_2}( {z - \tau} )d\tau} } } dz\\
=& {F_1}( z )*{f_2}( z )\\
{\rm{ = }}&\int_0^z {\;{F_1}( \tau  ) \cdot {f_2}( {z - \tau } )} d\tau.
\end{aligned}
\end{equation}
\indent The CDF of the service delay is
\begin{equation}\label{exp67}
F_{_{{T_{ms}}}}'( t )\;  = 1 - \int_0^{\frac{{{U_m}{N_m}}}{t}} {\;{F_1}( \tau ) \cdot {f_2}( {\frac{{{U_m}{N_m}}}{t} - \tau })} d\tau.
\end{equation}
\indent The probability of packet transmission failure is
\begin{equation}\label{exp68}
P_{m-out}'=\int_0^{\frac{{{U_m}{N_m}}}{{{t_{out}}}}} {\;{F_1}( \tau  ) \cdot {f_2}( {\frac{{{U_m}{N_m}}}{{{t_{out}}}} - \tau } )} d\tau.
\end{equation}
\begin{figure*}
\begin{equation}\label{exp69}
\begin{aligned}
&E\left[ {T_{ms}'|{T_{ms}} < {t_{out}}} \right]\\
&{\rm{ = }}\frac{1}{{1 - P_{m-out}'}}( {{t_{out}}{F_{T_{ms}'}}( {{t_{out}}} ) - \int_0^{{t_{out}}} {{F_{T_{ms}'}}( t )dt} } )\\
& = \frac{1}{{1 - P_{m-out}'}}( {{t_{out}}{F_{T_{ms}'}}( {{t_{out}}} ) - t_{out} + \int_0^{{t_{out}}} { {\int_0^{\frac{{{U_m}{N_m}}}{t}} {\;{F_1}( \tau ) \cdot {f_2}( {\frac{{{U_m}{N_m}}}{t} - \tau } )} d\tau } dt} } )\;\;\;\\
&{\rm{ = }}\frac{1}{{1 - P_{m-out}'}}( {\int_0^{{t_{out}}} {{\int_0^{\frac{{{U_m}{N_m}}}{t}} {\;{F_1}( \tau  ) \cdot {f_2}( {\frac{{{U_m}{N_m}}}{t} - \tau } )} d\tau } dt}  - t_{out} \cdot {F_u}( {\frac{{{U_m}{N_m}}}{{t_{out}}}} )} ).
\end{aligned}
\end{equation}
{\noindent} \rule[-10pt]{18cm}{0.01em}
\end{figure*}
\indent Using partial integration, (\ref{exp69}) is derived.

\indent Similar to (\ref{exp69}), (\ref{exp70}) is obtained as
\begin{equation}\label{exp70}
\begin{aligned}
E\left[ {T_{ms}^{2'}|T_{ms}' < {t_{out}}} \right]{\rm{ = }}&\frac{1}{{1 - P_{m-out}'}}( t_{out}^2{F_{T_{ms}'}}( {{t_{out}}} )\\
&- 2\int_0^{{t_{out}}} {t{F_{T_{ms}'}}( t )dt} ).
\end{aligned}
\end{equation}
\indent Substituting (\ref{exp68}) and (\ref{exp69}) into (\ref{exp40}), the average service delay is obtained as
\begin{equation}\label{exp71}
\begin{aligned}
E\left( {T_{ms}'} \right) &= \int_0^{{t_{out}}} {{F_u}( {\frac{{{U_m}{N_m}}}{t}} )dt} \\
&= \int_0^{{t_{out}}} { {\int_0^{\frac{{{U_m}{N_m}}}{t}} {\;{F_1}( \tau ) \cdot {f_2}( {\frac{{{U_m}{N_m}}}{t} - \tau } )} d\tau } dt}.
\end{aligned}
\end{equation}
\indent Similarly, the second moment of the service delay is
\begin{small}
\begin{equation}\label{exp72}
\begin{aligned}
E\left( {T_{ms}^{2'}} \right) &= 2\int_0^{{t_{out}}} {t{F_u}( {\frac{{{U_m}{N_m}}}{t}} )} dt\\
&{\rm{ = 2}}\int_0^{{t_{out}}} { {\int_0^{\frac{{{U_m}{N_m}}}{t}} {\;t{F_1}( \tau ) \cdot {f_2}( {\frac{{{U_m}{N_m}}}{t} - \tau })} d\tau } dt}.
\end{aligned}
\end{equation}
\end{small}
\begin{spacing}{1.0}
\indent  (\ref{exp34}) in Theorem 3 is derived by (\ref{exp47}), (\ref{exp71}) and (\ref{exp72}). Substituting (\ref{exp67}-\ref{exp68}) into (\ref{exp40}), the third moment of the service delay can be derived as
\end{spacing}
\begin{equation}\label{exp73}
\begin{aligned}
E\left[ {T_{ms}^{3'}|T_{ms}' < {t_{out}}} \right]{\rm{ = }}&\frac{1}{{1 - P_{m-out}'}}( t_{out}^3{F_{T_{ms}'}}( {{t_{out}}} )\\
&- 3\int_0^{{t_{out}}} {{t^2}{F_{T_{ms}'}}( t )dt}),
\end{aligned}
\end{equation}
and
\begin{footnotesize}
\begin{equation}\label{exp74}
\begin{aligned}
E\left( {T_{ms}^{3'}} \right) &= 3\int_0^{{t_{out}}} {{t^2}} {F_u}( {\frac{{{U_m}{N_m}}}{t}} )dt\\
 &= {\rm{3}}\int_0^{{t_{out}}} { {\int_0^{\frac{{{U_m}{N_m}}}{t}} {\;{t^2}{F_1}( \tau  ) \cdot {f_2}( {\frac{{{U_m}{N_m}}}{t} - \tau })} d\tau } dt}.
\end{aligned}
\end{equation}
\end{footnotesize}
\indent Substituting (\ref{exp71}-\ref{exp74}) and (\ref{exp50}) into (\ref{exp51}-\ref{exp54}), (\ref{exp32}) in Theorem 3 is derived.

\end{appendices}

\begin{IEEEbiography}[{\includegraphics[width=1in,height=1.25in,clip,keepaspectratio]{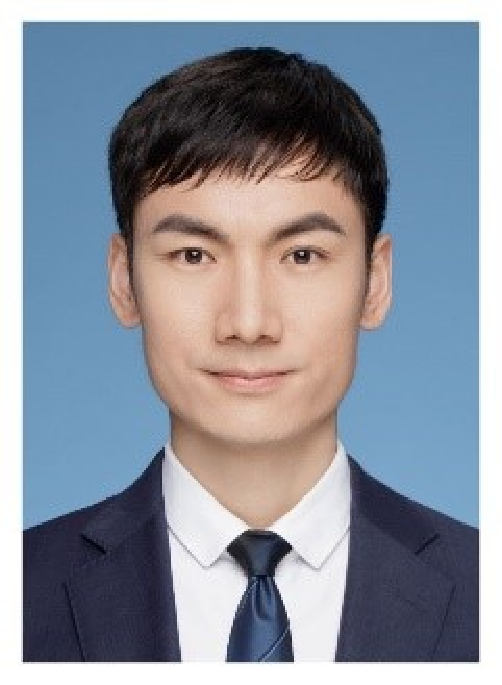}}]
{Zhiqing Wei} (Member, IEEE) received the B.E. and Ph.D. degrees from the Beijing University of Posts and Telecommunications (BUPT), Beijing, China, in 2010 and 2015, respectively. He is an Associate Professor with BUPT. He has authored one book, three book chapters, and more than 50 papers. His research interest is the performance analysis and optimization of intelligent machine networks. He was granted the Exemplary Reviewer of IEEE WIRELESS COMMUNICATIONS LETTERS in 2017, the Best Paper Award of WCSP 2018. He was the Registration Co-Chair of IEEE/CIC ICCC 2018, the publication Co-Chair of IEEE/CIC ICCC 2019 and IEEE/CIC ICCC 2020.
\end{IEEEbiography}

\begin{IEEEbiography}[{\includegraphics[width=1in,height=1.25in,clip,keepaspectratio]{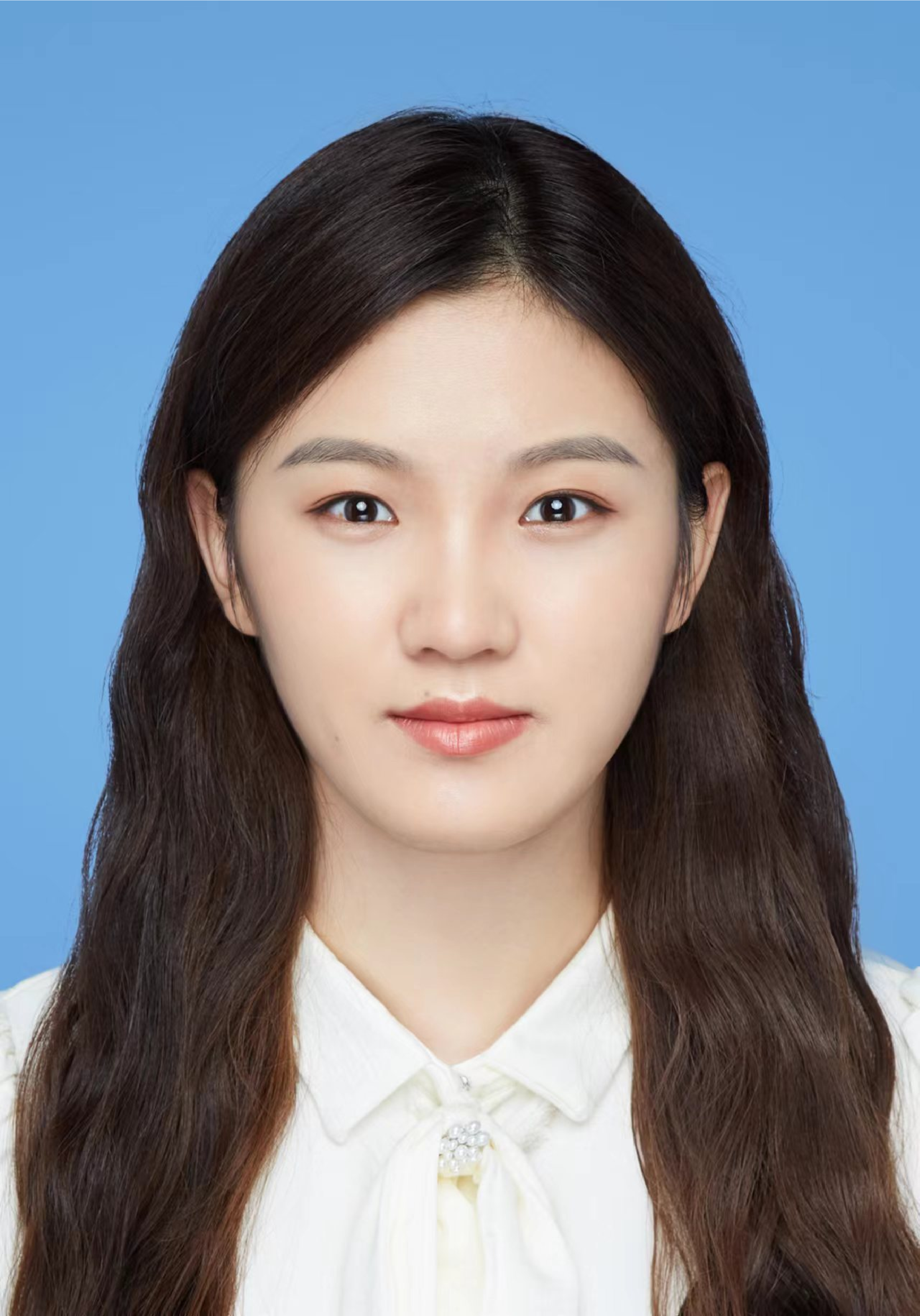}}]
{Ling Zhang} (Student Member, IEEE) received the B.E. degree from China University of Petroleum (UPC), Shandong, China, in 2020. She is working on the master's degree in Beijing University of Posts and Telecommunications (BUPT). Her research interest is the performance analysis of intelligent machine networks.
\end{IEEEbiography}

\begin{IEEEbiography}[{\includegraphics[width=1in,height=1.25in,clip,keepaspectratio]{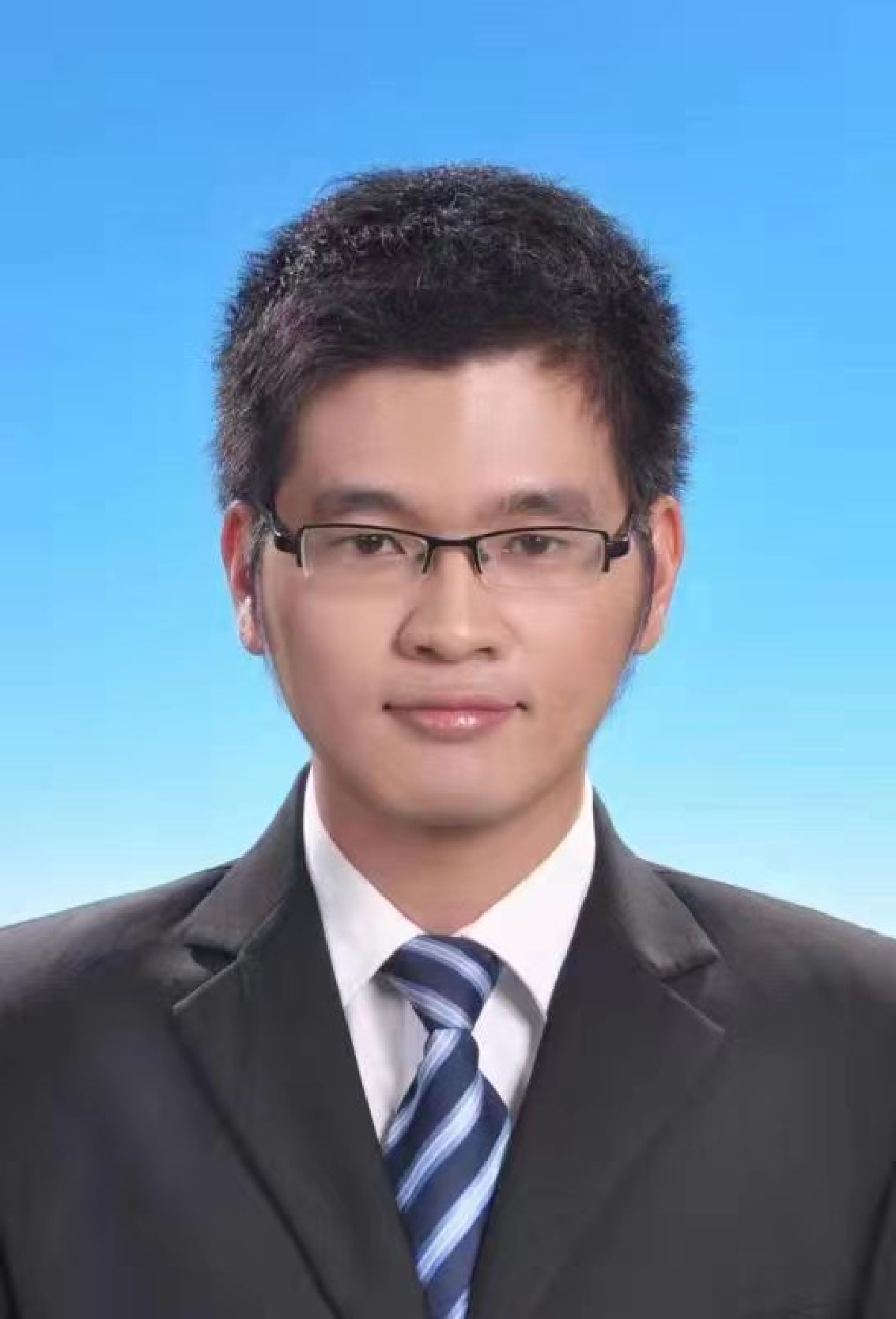}}]
{Gaofeng Nie}(Member, IEEE) received the B.S. degree in communications engineering and the Ph.D. degree in telecommunications and information system from the Beijing University of Posts and Telecommunications (BUPT), in 2010 and 2016, respectively. He is currently a Lecturer with BUPT. His research interests include SDN over wireless networks and key technologies in B5G/6G wireless networks.
\end{IEEEbiography}

\begin{IEEEbiography}[{\includegraphics[width=1in,height=1.25in,clip,keepaspectratio]{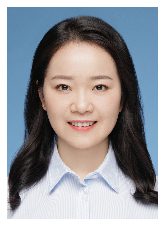}}]
{Huici Wu} (Member, IEEE) received the Ph.D degree from Beijing University of Posts and Telecommunications (BUPT), Beijing, China, in 2018. From 2016 to 2017, she visited the Broadband Communications Research (BBCR) Group, University of Waterloo, Waterloo, ON, Canada. She is now an Associate Professor at BUPT. Her research interests are in the area of wireless communications and networks, with current emphasis on collaborative air-to-ground communication and wireless access security.
\end{IEEEbiography}

\begin{IEEEbiography}[{\includegraphics[width=1.1in,height=1.25in,clip,keepaspectratio]{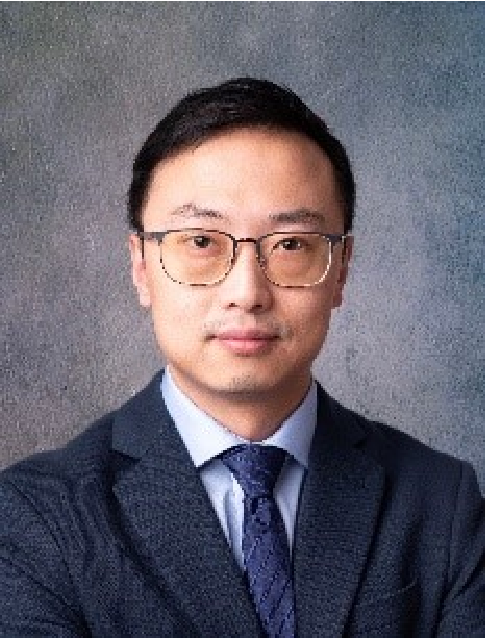}}]{Ning Zhang} (Senior Member, IEEE)  received the Ph.D degree in Electrical and Computer Engineering from University of Waterloo, Canada, in 2015. After that, he was a postdoc research fellow at University of Waterloo and University of Toronto, respectively. Since 2020, he has been an Associate Professor in the Department of Electrical and Computer Engineering at University of Windsor, Canada. His research interests include connected vehicles, mobile edge computing, wireless networking, and security. He is a Highly Cited Researcher (Web of Science). He serves/served as an Associate Editor of IEEE Transactions on Mobile Computing, IEEE Communications Surveys and Tutorials, IEEE Internet of Things Journal, and IEEE Transactions on Cognitive Communications and Networking. He also serves/served as a TPC chair for IEEE VTC 2021 and IEEE SAGC 2020, a general chair for IEEE SAGC 2021, a chair for track of several international conferences and workshops including IEEE ICC, VTC, INFOCOM Workshop, and Mobicom Workshop. He received a number of Best Paper Awards from conferences and journals, such as IEEE Globecom, IEEE ICC, IEEE ICCC, IEEE WCSP, and Journal of Communications and Information Networks. He also received IEEE TCSVC Rising Star Award and IEEE ComSoc Young Professionals Outstanding Nominee Award. He serves as the Vice Chair for IEEE Technical Committee on Cognitive Networks and IEEE Technical Committee on Big Data.
\end{IEEEbiography}

\begin{IEEEbiography}[{\includegraphics[width=1in,height=1.25in,clip,keepaspectratio]{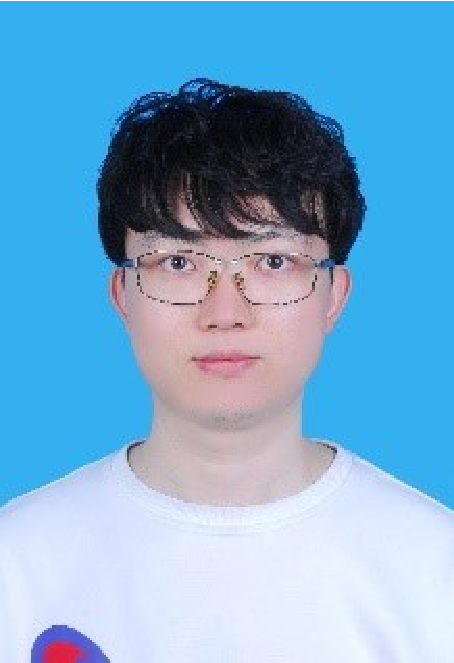}}]
{Zeyang Meng} (Student Member, IEEE)) received the B.E. degree from Beijing University of Posts and Telecommunications (BUPT), Beijing, China, in 2020. He is working on the Ph.D. degree in BUPT. His research interest is the performance analysis of intelligent machine networks.
\end{IEEEbiography}

\begin{IEEEbiography}[{\includegraphics[width=1.1in,height=1.25in,clip,keepaspectratio]{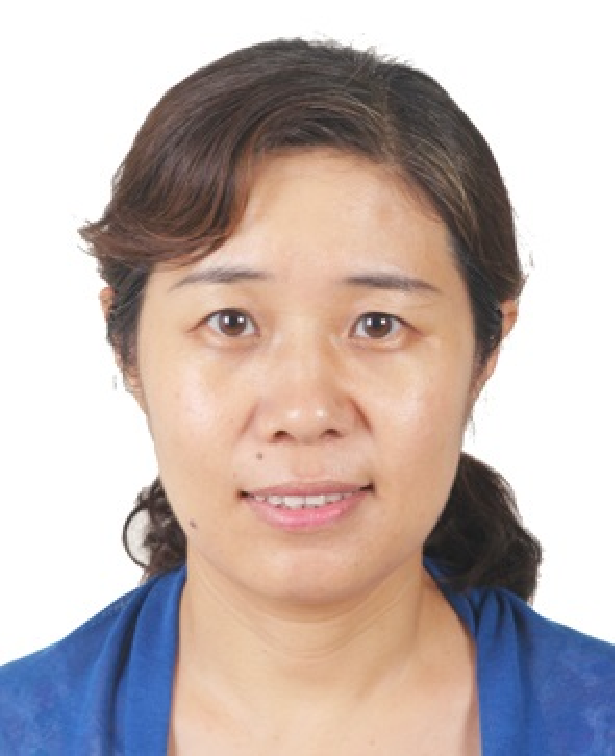}}]{Zhiyong Feng} (M'08-SM'15) received her B.E., M.E., and Ph.D. degrees from Beijing University of Posts and Telecommunications (BUPT), Beijing, China. She is a professor at BUPT, and the director of the Key Laboratory of the Universal Wireless Communications, Ministry of Education, P.R.China. She is a senior member of IEEE, vice chair of the Information and Communication Test Committee of the Chinese Institute of Communications (CIC). Currently, she is serving as Associate Editors-in-Chief for China Communications, and she is a technological advisor for international forum on NGMN. Her main research interests include wireless network architecture design and radio resource management in 5th generation mobile networks (5G), spectrum sensing and dynamic spectrum management in cognitive wireless networks, and universal signal detection and identification.
\end{IEEEbiography}

\vfill

\end{document}